\newif\iflong
\tikzset{near start abs/.style={xshift=10cm}}
\begin{document}

\iflong
\title[]{Adjusted Objects: An Efficient and Principled Approach\\ to Scalable Programming (Extended Version)}
\settopmatter{printacmref=false, printccs=true, printfolios=true}
\else
\title{Adjusted Objects: An Efficient and Principled Approach to Scalable Programming}
\acmConference[Middleware'25]{26th ACM/IFIP International Middleware Conference}{December 15--19}{Nashville, TN, USA}
\fi

\author{Boubacar Kane}
\orcid{0009-0009-4314-9386}
\email{boubacar.kane@telecom-sudparis.eu}
\affiliation{%
  \institution{Telecom SudParis, Inria Saclay}
  \institution{Institut Polytechnique de Paris}
  \city{Palaiseau}
  \country{France}
}

\author{Pierre Sutra}
\orcid{0000-0002-0573-2572}
\email{pierre.sutra@telecom-sudparis.eu}
\affiliation{%
  \institution{Telecom SudParis, Inria Saclay}
  \institution{Institut Polytechnique de Paris}
  \city{Palaiseau}    
  \country{France}
}

\begin{abstract}
  Parallel programs require software support to coordinate access to shared data.
  For this purpose, modern programming languages provide strongly-consistent shared objects.
  To account for their many usages, these objects offer a large API.
  However, in practice, each program calls only a small subset of the interface.
  Leveraging such an observation, we propose to tailor a shared object for a specific usage.
  We call this principle \emph{adjusted objects}.
  Adjusted objects already exist in the wild.
  This paper provides their first systematic study.
  We explain how everyday programmers already adjust common shared objects (such as queues, maps, and counters) for better performance.
  We present the formal foundations of adjusted objects using a new tool to characterize scalability, the indistinguishability graph.
  Leveraging this study, we introduce a library named \sys to inject adjusted objects in a Java program.
  In micro-benchmarks, objects from the \sys library improve the performance of standard JDK shared objects by up to two orders of magnitude.
  We also evaluate \sys with a Retwis-like benchmark modeled after a social network application. 
  On a modern server-class machine, \sys boosts by up to \speedup{1.7} the performance of the benchmark.
\end{abstract}

%%
%% The code below is generated by the tool at http://dl.acm.org/ccs.cfm.
%% Please copy and paste the code instead of the example below.
%%

\begin{CCSXML}
<ccs2012>
   <concept>
       <concept_id>10010147.10011777.10011778</concept_id>
       <concept_desc>Computing methodologies~Concurrent algorithms</concept_desc>
       <concept_significance>500</concept_significance>
       </concept>
   <concept>
       <concept_id>10003752.10003753.10003761</concept_id>
       <concept_desc>Theory of computation~Concurrency</concept_desc>
       <concept_significance>500</concept_significance>
       </concept>
   <concept>
       <concept_id>10011007.10010940.10010971</concept_id>
       <concept_desc>Software and its engineering~Software system structures</concept_desc>
       <concept_significance>500</concept_significance>
       </concept>
 </ccs2012>
\end{CCSXML}

\ccsdesc[500]{Computing methodologies~Concurrent algorithms}
\ccsdesc[500]{Theory of computation~Concurrency}
\ccsdesc[500]{Software and its engineering~Software system structures}

\keywords{Design, Performance, Theory, Scalability, Multicore, Software interfaces}

\maketitle

\section{Introduction}
\labsection{introduction}

%% sharding is not stripping
%% stripping is the allocation of data request to a given thread
%% this changes from one execution of a datastore to another (because it depends on #threads)

%% Modern datastores are doing computation on shared data in stripes.
%% This means that threads are in charge of mutating a specific part of the data store.
%% A typical example is as follows:
%% Consider that the datastore is running with $n$ threads.
%% Then thread $t$ is doing all the client requests on keys $k=\modulo{n}{n}$.
%% In such a case, a key observation is that all the mutators on shared data are commuting operations.
%% We may leverage this by considering efficient implementation of such shared objects.

\paragraph{Context}
Over the past 20 years, the clock rate of processors has almost stalled.
From the \ghz{1} landmark at the turn of the 2000s, current processors just tick at a few GHz.
Nevertheless, Moore's law is well alive and the industry is able to always pack more transistors into the same square inch of silicon.
Manycore systems have 10s to 100s of hardware threads.
They are now common place in modern IT infrastructures, ranging from servers to small devices.
To harvest this computing power, programmers write parallel programs.
Ideally, these programs should be scalable, that is their performance improves linearly with the number of cores.

Getting right parallelism in an application is notoriously difficult.
To help programmers in this task, programming languages and libraries provide shared objects.
A plethora of objects is available, from
\begin{inparaenumorig}[]%
\item primitive data types (such as long, integer, reference),
\item collections (set, map, list, queue),
\item to complex synchronization mechanisms (compare-and-swap, lock, semaphore, monitor, barrier, transactions).
\end{inparaenumorig}
In most cases, these shared objects are strongly consistent.
This means that the interleaving of concurrent calls to the methods of an object is equivalent to a sequential execution.
Strong consistency allows the programmer to understand a shared object from its sequential specification.
This greatly simplifies the reasoning about correctness.

\begin{figure*}[!t]
  \hspace{-8em}
  \begin{minipage}{.4\textwidth}
    \begin{subfigure}{\textwidth}
      \scalebox{.65}{\begin{tikzpicture}
  \pgfplotsset{ every non boxed x axis/.append style={x axis line style=-},
    every non boxed y axis/.append style={y axis line style=-}}
  \pgfplotsset{every tick label/.append style={font=\Large}}
  \begin{axis}[ybar stacked, axis x line=bottom, axis y line=left, enlarge x limits=true, enlarge y limits=true, grid=minor, label style={font=\Large}, xlabel={}, ylabel={Usage \%},enlarge x limits=0.4,  width=8cm, height=8cm, legend columns=2, legend style={font=\LARGE, nodes={scale=0.7, transform shape}}, legend style={at={(1,0.1)},anchor=south west}, xtick=data, xticklabel style = {yshift=-0.3cm}, xticklabels={Ignite, Cassandra, Hadoop}, title={}]
\addplot[fill, color=BurntOrange] coordinates
    { (0,8.64) (1,8.05) (2,3.55) };
\addplot[fill, color=TealBlue] coordinates
    { (0,0.41) (1, 0.00) (2, 0.00) };
\addplot[fill, color=Brown] coordinates
    { (0,20.16) (1,22.99) (2,19.86) };
\addplot[fill, color=Dandelion] coordinates
    { (0,0.41) (1, 0.00) (2, 0.00) };
\addplot[fill, color=brown] coordinates
    { (0,0.41) (1, 0.00) (2, 0.00) };
\addplot[fill, color=cyan] coordinates
    { (0,2.06) (1,1.15) (2,2.13) };
\addplot[fill, color=darkgray] coordinates
    { (0, 0.00) (1,1.15) (2, 0.00) };
\addplot[fill, color=gray] coordinates
    { (0,1.23) (1, 0.00) (2,0.71) };
\addplot[fill, color=teal] coordinates
    { (0,30.45) (1,27.59) (2,38.3) };
\addplot[fill, color=lime] coordinates
    { (0,3.29) (1,3.45) (2,1.42) };
\addplot[fill, color=magenta] coordinates
    { (0,2.88) (1,3.45) (2,0.71) };
\addplot[fill, color=olive] coordinates
    { (0,0.41) (1, 0.00) (2, 0.00) };
\addplot[fill, color=orange] coordinates
    { (0,0.41) (1,1.15) (2, 0.00) };
\addplot[fill, color=pink] coordinates
    { (0,11.52) (1,13.79) (2,13.48) };
\addplot[fill, color=purple] coordinates
    { (0,0.41) (1, 0.00) (2, 0.00) };
\addplot[fill, color=violet] coordinates
    { (0,0.82) (1, 0.00) (2, 0.00) };
% \addplot[fill, color=yellow] coordinates
    % { (0, 0.00) (1,1.15) (2, 0.00) };
\addplot[fill, color=Apricot] coordinates
    { (0,0.41) (1, 0.00) (2, 0.00) };
\addplot[fill, color=Bittersweet] coordinates
    { (0,0.41) (1, 0.00) (2, 0.00) };
\addplot[fill, color=BrickRed] coordinates
    { (0,6.58) (1,6.9) (2,4.26) };
\addplot[fill, color=Red] coordinates
    { (0, 0.00) (1, 0.00) (2,0.71) };
\addplot[fill, color=Tan] coordinates
    { (0,8.64) (1,8.05) (2,13.48) };
\addplot[fill, color=Sepia] coordinates
    { (0, 0.00) (1, 0.00) (2,0.71) };
\addplot[fill, color=Goldenrod] coordinates
    { (0, 0.00) (1,1.15) (2,0.71) };
% \addplot[fill, color=GreenYellow] coordinates
    % { (0,0.41) (1, 0.00) (2, 0.00) };
% \legend{, , incrementAndGet, , , , , , get, , , , , set, , , , , , , , addAndGet, , , };

\legend{getAndIncrement, wait, incrementAndGet, notifyAll, increment, decrementAndGet, updateAndGet, longValue, get, getAndAdd, getAndSet, value, accumulateAndGet, set, sumThenReset, intValue, doubleValue, AtomicLong, compareAndSet, equals, addAndGet, toString, getAndUpdate};
\end{axis}
\end{tikzpicture}}
      % \caption{\labfigure{mining:stack}}
    \end{subfigure} 
  \end{minipage}
  \hspace{8em}%
  \begin{minipage}{.25\textwidth}
    \begin{subfigure}{\textwidth}
      \vspace{1em}%
      \scalebox{.65}{\begin{tikzpicture}
  \pgfplotsset{ every non boxed x axis/.append style={x axis line style=-},
    every non boxed y axis/.append style={y axis line style=-}}
  \pgfplotsset{every tick label/.append style={font=\normalsize}}
  \begin{axis}[scatter/classes={U={mark=+,red}, NU={mark=x,blue}}, label style={font=\Large}, legend pos=outer north east,axis x line=bottom, axis y line=left, enlarge x limits=true, xlabel = {Classes}, ylabel = {Methods}, enlarge y limits=true, xtick = data, xticklabels = {,,},ytick=data, yticklabels={compareAndSet,get,set,getAndAdd,getAndIncrement,incrementAndGet,getAndSet,addAndGet,updateAndGet,decrementAndGet,accumulateAndGet,getAndUpdate}, title={}]
\addplot[scatter,only marks, scatter src=explicit symbolic]
coordinates {
(0,0) [U]
(0,1) [U]
(1,2) [NU]
(1,3) [U]
(1,1) [U]
(2,4) [U]
(3,4) [U]
(4,2) [NU]
(5,5) [NU]
(6,5) [NU]
(6,1) [U]
(7,1) [U]
(7,5) [U]
(8,5) [NU]
(8,1) [U]
(9,0) [U]
(9,6) [NU]
(9,1) [U]
(10,5) [NU]
(10,1) [U]
(11,7) [NU]
(11,1) [U]
(12,5) [NU]
(13,2) [NU]
(13,1) [U]
(14,1) [U]
(14,5) [U]
(15,5) [NU]
(15,1) [U]
(16,3) [NU]
(16,7) [NU]
(17,0) [U]
(17,1) [U]
(17,5) [U]
(18,5) [NU]
(19,3) [U]
(20,4) [NU]
(21,0) [U]
(21,2) [NU]
(21,8) [NU]
(21,1) [U]
(22,5) [NU]
(22,2) [NU]
(22,1) [U]
(23,4) [NU]
(24,4) [NU]
(25,5) [NU]
(25,1) [U]
(26,5) [NU]
(26,1) [U]
(27,7) [NU]
(27,1) [U]
(28,5) [NU]
(28,6) [U]
(28,7) [NU]
(28,1) [U]
(29,4) [U]
(30,5) [NU]
(30,6) [NU]
(31,9) [NU]
(31,5) [U]
(32,10) [NU]
(33,2) [NU]
(33,5) [U]
(34,2) [NU]
(34,11) [U]
(35,2) [NU]
(35,5) [U]
(36,7) [NU]
(37,5) [NU]
(37,1) [U]
(38,1) [U]
(38,7) [NU]
(39,0) [U]
(39,2) [NU]
(39,1) [U]
(40,7) [NU]
(40,1) [U]
(41,5) [NU]
(41,1) [U]
(42,4) [NU]
(42,2) [NU]
(43,2) [NU]
(44,0) [U]
(44,2) [NU]
(44,1) [U]
};
\end{axis}
\end{tikzpicture}}
      % \caption{\labfigure{mining:tuple}}
    \end{subfigure}
  \end{minipage}
  ~~~~
  \caption{
    \labfigure{mining}
    Usage of \code{AtomicLong} across different open-source Java projects:
    \emph{(left)} lists the methods that are called.
    \emph{(right)} indicates if the returned value of a method is used (\textcolor{red}{+}), or not (\textcolor{blue}{$\times$}) in Apache Cassandra.
  }
\end{figure*}
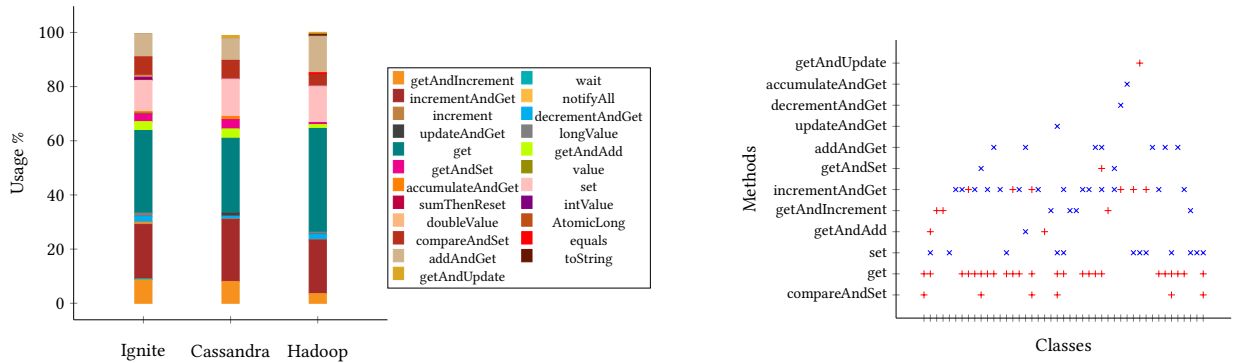

\paragraph{Motivation}
Typically, a shared object provides a wide interface with complex capabilities.
This accounts for the many usages the object has across the spectrum of applications.
Understanding how this interface is used in practice may provide hints to improve the object.

A first observation is that an application rarely uses the interface of a shared object in full.
To assess this point, we have delved into the usage of the \textit{java.util.concurrent} library in several prominent open-source projects.
\reffigure{mining} reports on our findings for \code{AtomicLong}.
In \reffigure{mining}(left), we can notice that
\emph{(i)} projects only use a handful of the available methods,
\emph{(ii)} some methods are used much more frequently than others, and
\emph{(iii)} certain groups of methods are employed specifically altogether.
Additionally, the return value of a method is not always used.
We illustrate this point in \reffigure{mining}(right).
For Apache Cassandra, this figure indicates if a class makes use of the return value, or not.
In many cases, the response of the method call is just ignored.

Another key observation is that if the specification is not used in full, it might be possible to implement a better version of the object.
Some expert programmers have already leveraged this insight to write objects ad-hoc to their use cases.
These implementations boost the performance of a program with improved scalability thanks to reduced hardware contention.
For instance, \reflst{linkedin} shows an extract of the Conccurentli project at Linkedin \cite{concurrentli}.
In this library, the programmers have implemented a specific version of \code{AtomicReference}.
Here, as the reference is written at most once (line~15), it is possible to buffer its value in the CPU cache for better efficiency (lines~6-7).
Other real-life examples of ad-hoc objects include counters \cite{LongAdder} and queues \cite{MpscLinkedQueue, FastSizeDeque}.

\paragraph{Objective \& Contributions}
In this paper, we propose to tailor a shared object for a specific use.
We call this principle \emph{adjusted objects}.
As pointed above, adjusted objects already exist in the wild.
This paper presents the first systematic study of them.
In detail, we claim the following contributions:
\begin{itemize}
\item
  We present the formal foundations of adjusted objects.
  First, we introduce a new characterization of the scalability of a shared object: the indistinguishability graph.
  In a nutshell, the denser this graph is the more scalable the object is.
  Adjusting an object is interpreted as adding more edges to the graph.
\item Based on this characterization, we then propose \sys a library of adjusted objects for Java.
  \sys includes different adjusted objects for primitive data types, collection, and synchronization mechanisms.
  These objects are mirroring the shared objects found in the Java Development Kit (JDK).
  They offer an efficient implementation tailored to a specific usage.
\item
  We evaluate the performance benefits of \sys across a range of micro-benchmarks.
  In these benchmarks, \sys improves the performance of the standard JDK objects by up to two orders of magnitude.
  We also evaluate \sys with a Retwis-like \cite{retwis} benchmark modeled after a social network application. 
  On a modern server-class machine, \sys boosts by up to \speedup{1.7} the performance of the benchmark.
\item
  We also report on the usage of shared objects across 50 projects of the Apache Software foundation.
  Our study assesses that these objects
  \emph{(i)} account for a small part of a code base,
  \emph{(ii)} they are central to an application, and
  \emph{(iii)} as underlined above, each program uses only a fraction of an object's interface.
\end{itemize}

\lstinputlisting[belowskip=-0.8 \baselineskip, float=t, label=lst:linkedin, caption={Extract from the Concurrentli library}, xleftmargin=3em, linewidth=22em]{AtomicWriteOnceReference.java}
% \vspace{-.5em}

\paragraph{Outline}
\refsection{model} introduces our system model and definitions.
In \refsection{dist}, we present the indistinguishability graph and detail how this notion helps to measure scalability.
Based on such graphs, \refsection{adjusted} principles the notion of adjusted object.
The \sys library is described in \refsection{design}.
\refsection{evaluation} details our evaluation campaign.
We survey related work in \refsection{related} before closing in \refsection{conclusion}.
\iflong
For readability, proofs are deferred to the appendix.
\else
Due to space constraints, the proofs are deferred to the long version of this paper \cite{long-version}
\fi

\section{System Model}
\labsection{model}

\iflong
In what follows, we give a brief overview of our system model.
The full details are in \refappendix{model}.
\fi

% basics
We assume the standard shared memory model of computation.
In this model, a set of $n \geq 2$ threads communicate using shared objects.
Each object abides by a sequential specification, or \emph{data type}.
For an object $O$, $O.T$ denotes its data type.
The base objects of the shared memory model are registers to which threads read and write.
The memory also offers common synchronization primitives, such as compare-and-swap and test-and-set, to update a register.
% wait-freedom; linearizability
Registers are used to implement higher-level objects (such as maps, queues, and trees).
We restrict our attention to wait-free linearizable deterministic objects.
Wait-freedom guarantees that every call to the interface eventually returns.
Linearizability ensures that the interleaving of concurrent calls is equivalent to a sequential execution that preserves real time.

% restrictions
Some objects may have restricted use.
They can be one-shot, that is each thread calling them at most once---in contrast with long-lived objects.
Another restriction is what part of the interface is available to which thread.
For instance, single-writer multiple-readers ($\SWMR$) registers are common in literature \cite{art}.
To model this, we consider that each object $O$ has an \emph{access permission map} $O.m$ that defines operations a thread can execute.
Every operation defined by the data type $O.T$ is executable by at least one thread.

% parallelism
Threads may access a common register when they call a shared object.
If this happens and the register is updated, we call it a conflict.
In detail, operations $c$ and $d$ called by two distinct threads \emph{conflict} when they access a common register that one of them updates.
Operations $c$ and $d$ \emph{update conflict} when they both update a common register \cite{bayou}.
Conversely, an operation is \emph{invisible} when it never updates a register accessed by some other operation \cite{invisible}.
An operation is \emph{conflict-free} when it never conflicts with any other operation.
When no two operations conflict, the implementation is \emph{conflict-free}.
In such a situation, shared data is read-only.
In the case where threads do not access a common register, the implementation is disjoint-access parallel \cite{dap}.

% scalability
An implementation \emph{scales} when its performance grows with the number of threads.
In practice, several parameters influence scalability, such as the size and locality of the working set.
When the workload is parallel-enough, the key factor is the conflict rate.
On modern hardware, an implementation scales when conflicts are rare \cite{rule}.
Hence, it is paramount to understand when such an implementation is feasible.

\section{Scalability}
\labsection{dist}

To principle adjusted objects, we need a good indicator of scalability.
Because it captures the synchronization power, the consensus number is a potential candidate in this task.
Unfortunately, it is well-known that this is not the case.
We start this section by recalling such a result (\refsection{dist:motivation}).
To fill the gap, we then propose a new metric, the indistinguishability graph (\refsection{dist:def}).
The indistinguishability graph is built from the sequential data type of a shared object.
We explain how scalability is closely related to the density of this graph (\refsection{dist:scalability}).

\subsection{On the consensus number}
\labsection{dist:motivation}

In distributed computing, a central object is consensus.
Its interface exposes a single operation: $\propose(v)$, where $v$ is some input value.
Threads that call $\propose()$ all return (aka. \emph{agree} on) one of the input values.

Given a data type $T$, the maximal number of threads that may agree using objects of type $T$ and read/write operations on registers is called the consensus number of $T$ \cite{Herlihy}.
We write $\cn(T)$ the consensus number of $T$, and $\cn_k$ all the shared objects with consensus number $k$.
We will say that an operation $c$ of $T$ has consensus power when $T$ restricted to just operation $c$ has a consensus number greater than 1.

Objects like queues permit to reach consensus among several threads.
Registers alone are too weak to solve a distributed agreement.
They belong to $\cn_1$.

A $m$-valued max-register stores up to $log(m)$ values.
Updating its content demands $O(min(n,log(m)))$ read/write operations on the registers \cite{aspnes12}.
Differently, reading a snapshot object requires $\Omega(n)$ such steps \cite{SnapshotComplexity}.
Hence, despite these two objects are in $\cn_1$, they scale differently.
This explains why the consensus number is a poor indicator of scalability.

\begin{figure}[!t]
  \centering
  \captionsetup{justification=centering}
  \begin{subfigure}[t]{.15\textwidth}
    \centering
    \begin{tikzpicture}
  [
    scale=1,
    auto,
    shifttl/.style={shift={(-\shiftpoints,\shiftpoints)}},
    shifttr/.style={shift={(\shiftpoints,\shiftpoints)}},
    shiftbl/.style={shift={(-\shiftpoints,-\shiftpoints)}},
    shiftbr/.style={shift={(\shiftpoints,-\shiftpoints)}},
  ]
  \begin{scope}[<-,thick]
    \node[adot,label=270:$x_1$] (v1) at (0,0) {};
    \node[adot,label=270:$x_3$] (v3) at (1,-1) {};
    \node[adot,label=270:$x_5$] (v5) at (2,0) {};
    \node[adot,label=$x_2$] (v2) at (0,1) {};
    \node[adot,label=$x_4$] (v4) at (1,2) {};
    \node[adot,label=$x_6$] (v6) at (2,1) {};

    \foreach \myx in {1,...,6}{
      \foreach \myy in {1,...,6}{
        \path[basic] (v\myx) edge (v\myy);
      }
    }

    % remove some edges
    \draw[draw, -latex', White] (v1) -- (v4); 
    \draw[draw, -latex', White] (v5) -- (v6);
    \draw[draw, -latex', White] (v2) -- (v3); 
    
    \draw[ab] (v1) -- (v4);
    \draw[ab] (v2) -- (v3);
    \draw[ab] (v5) -- (v6);
  \end{scope}
\end{tikzpicture}
\caption*{Reference} % \labfigure{ig:reg}
  \end{subfigure}
  \hfill
  \begin{subfigure}[t]{.15\textwidth}
    \centering
    \begin{tikzpicture}
  [
    scale=1,
    auto,
    shifttl/.style={shift={(-\shiftpoints,\shiftpoints)}},
    shifttr/.style={shift={(\shiftpoints,\shiftpoints)}},
    shiftbl/.style={shift={(-\shiftpoints,-\shiftpoints)}},
    shiftbr/.style={shift={(\shiftpoints,-\shiftpoints)}},
  ]
  \begin{scope}[<-,thick]
    \node[adot,label=270:$x_1$] (v1) at (0,0) {};
    \node[adot,label=270:$x_3$] (v3) at (1,-1) {};
    \node[adot,label=270:$x_5$] (v5) at (2,0) {};
    \node[adot,label=$x_2$] (v2) at (0,1) {};
    \node[adot,label=$x_4$] (v4) at (1,2) {};
    \node[adot,label=$x_6$] (v6) at (2,1) {};
    % \path[basic,label=$\opa\opb$] (v1) edge (v2);
    \path[basic] (v1) edge (v5);
    \path[basic] (v2) edge (v5);
    \path[basic] (v3) edge (v6);
    \path[basic] (v4) edge (v6);

    \draw[basic, draw=Green] (v2) -- (v4);
    % node[midway, above, yshift=1ex, xshift=-1ex] (label1) {$\opc$}
    % (label1) edge[thin] ++(0.2,-0.3); 

    \draw[basic, draw=Green] (v1) -- (v4);
    % node[near end, below, yshift=-1ex, xshift=0.5ex] (label2) {$\opc$}
    % (label2) edge[thin] ++(-0.08,0.1); 

    \draw[basic, draw=Green] (v2) -- (v3);
    % node[midway, above, yshift=0.5ex, xshift=1ex] (label3) {$\opc$}
    % (label3) edge[thin] ++(-0.3,0); 

    \draw[basic, draw=Green] (v1) -- (v3);
    % node[midway, below, yshift=-1ex, xshift=-1ex] (label4) {$\opc$}
    % (label4) edge[thin] ++(0.2,0.3); 

    \draw[ab] (v1) -- (v2);
    % node[midway, left, xshift=-1ex] (label5) {$\opa\opb\opc$}
    % (label5) edge[thin] ++(0.51,0); 

    \draw[ab] (v3) -- (v4);
    % node[near end, right, xshift=0.1ex] (label6) {$\opa\opb\opc$}
    % (label6) edge[thin] ++(-0.25,0); 

    \draw[basic, draw=Green] (v5) -- (v6);
    % node[midway, right, xshift=0.1ex] (label7) {$\opc$}
    % (label7) edge[thin] ++(-0.1,0); 
    
  \end{scope}
\end{tikzpicture}
\caption*{Set} % \labfigure{ig:set}
  \end{subfigure}
  \hfill
  \begin{subfigure}[t]{.15\textwidth}
    \centering
    \begin{tikzpicture}
  [
    scale=1,
    auto,
    shifttl/.style={shift={(-\shiftpoints,\shiftpoints)}},
    shifttr/.style={shift={(\shiftpoints,\shiftpoints)}},
    shiftbl/.style={shift={(-\shiftpoints,-\shiftpoints)}},
    shiftbr/.style={shift={(\shiftpoints,-\shiftpoints)}},
  ]
  \begin{scope}[<-,thick]
    \node[adot,label=270:$x_1$] (v1) at (0,0) {};
    \node[adot,label=270:$x_3$] (v3) at (1,-1) {};
    \node[adot,label=270:$x_5$] (v5) at (2,0) {};
    \node[adot,label=$x_2$] (v2) at (0,1) {};
    \node[adot,label=$x_4$] (v4) at (1,2) {};
    \node[adot,label=$x_6$] (v6) at (2,1) {};

    \draw[basic, draw=Blue] (v1) -- (v2);
    % % node[midway, left, xshift=-1ex] (label1) {$\opa$}
    % % (label1) edge[thin] ++(0.38,0); 

    \draw[basic, draw=Green] (v1) -- (v3);
    % % node[midway, below, yshift=-1ex, xshift=-1ex] (label2) {$\opc$}
    % % (label2) edge[thin] ++(0.2,0.3);

    \draw[basic, draw=Crimson] (v2) -- (v5);
    % % node[near start, below, yshift=-0.2ex] (label3) {$\opb$}
    % % (label3) edge[thin] ++(0.1,0.08); 
    
    \draw[basic, draw=Crimson] (v3) -- (v4);
    % % node[near end, right, xshift=0.1ex] (label4) {$\opb$}
    % % (label4) edge[thin] ++(-0.1,0);

    \draw[basic, draw=Blue] (v4) -- (v6);
    % % node[midway, above, xshift=0.1ex] (label5) {$\opa$}
    % % (label5) edge[thin] ++(-0.1,0);
    
    \draw[basic, draw=Green] (v5) -- (v6);
    % % node[midway, right, xshift=0.1ex] (label6) {$\opc$}
    % % (label6) edge[thin] ++(-0.1,0); 
    
  \end{scope}
\end{tikzpicture}
\caption*{Counter} % \labfigure{ig:cnt}
  \end{subfigure}
  \hfill
  \begin{subtable}[t]{.5\textwidth}
    \footnotesize
    \bigskip
    \centering
    \begin{tcolorbox}[
        sharp corners,
        boxrule=0.5pt,
        colback=white,
        boxsep=-.5pt,
        colframe=black
      ]
      \hspace{-1em}
      \begin{tabular}{lll}
        $x_1 = \opa \opb \opc$ &
        $x_2 = \opa \opc \opb$ &
        $x_3 = \opb \opa \opc$ \\
        $x_4 = \opb \opc \opa$ &
        $x_5 = \opc \opa \opb$ &
        $x_6 = \opc \opb \opa$
      \end{tabular}
      \quad
      \begin{tabular}{@{}ll}
        ${\color{Black}{\keyline}} = \{\opa, \opb\}$  \\
        ${\color{Violet}{\keyline}} = \{\opa, \opb\, \opc\}$
      \end{tabular}
      ~
      \begin{tabular}{@{}l}
        ${\color{Blue}{\keyline}} = \{\opa\}$ \\
        ${\color{Red}{\keyline}} =\{\opb\}$
      \end{tabular}
      ~
      \begin{tabular}{@{}l@{}}
        ${\color{Green}{\keyline}} = \{\opc\}$ \\
        ~
      \end{tabular}
    \end{tcolorbox}
  \end{subtable}
  \caption{
    \labfigure{ig}
    From left to right, the indistinguishability graph $\Gr(\{\opa,\opb,\opc\})$ for 
    a reference ($\opa=\set(1)$, $\opb=\set(2)$, $\opc=\get()$),
    a set ($\opa=\add(1)$, $\opb=\add(1)$, $\opc=\contains(1)$), and
    a counter ($\opa=\inc(1)$, $\opb=\inc(3)$, $\opc=\inc(5)$).
  }
  % \vspace{-.5em}%
\end{figure}

\subsection{Indistinguishability graph}
\labsection{dist:def}

This section introduces the indistinguishability graph.
Such a graph is built from the data type of the shared object.
We later explain how scalability and the density of the indistinguishability graph are related.

\paragraph{Definition}
Consider an object of type $T$, a state $s$, and a bag (a set with possible repetitions) $B$ of operations.
For some permutation $x$ of $B$, the response of $c \in B$ from $s$ is the return value of $c$ after applying $x$ until $c$ from state $s$. % FIXME beware occurences!
Two permutations $x$ and $x'$ are \emph{indistinguishable} from $s$ for $c$, denoted $x \indistinguishable{c,s} x'$, when
\begin{inparaenum}%
\item $c$ obtains the same response in both permutations, and
  % \tau(s,\until{x}{c}).\val = \tau(s,\until{x'}{c}).\val
\item there exists a common state $s'$ attainable after $c$ in both $x$ and $x'$.
  % \exists s', \until{x}{c} \pref \hat{x} \pref x, \until{x'}{c} \pref \hat{x'} \pref x', \tau(s,\hat{x}).\st = \tau(s,\hat{x'}).st
\end{inparaenum}%
% FIXME does 'or' imply consensus not solvable?
% Notice that this relation is clearly symmetric.
Hereafter, when $c$ and $s$ are clear from the context, we omit them ($x \indist x'$).

A convenient way to represent relation $\indist$ is through an \emph{indistinguishability graph}.
Given $B$ and $s$, the nodes of this graph $\Gr_{T}(B,s)$ corresponds to all the permutations of $B$.
There exists an edge $(x,x')$ \emph{labeled} with $C$ in $\Gr_{T}(B,s)$ if and only if $x$ and $x'$ are indistinguishable from $s$ for all the operations $c \in C$.
The label is considered \emph{strong} if applying $x$ from $s$ results in the same state as applying $x'$.

The \emph{indistinguishability class} of $x$ is the transitive closure of $x$ by relation $\indist$.
It is written $\clazz{x}^{B}_s$.
This corresponds to the connected component $x$ belongs to in $\Gr_{T}(B,s)$.
When $B$ and $s$ are clear from the context, we just write $\clazz{x}$.
An operation $c$ is \emph{labeling} in $\Gr_{T}(B,s)$ when it is a label of all the edges.
In this case, there is one indistinguishability class.
When $c$ is a strong label of all the edges, it is \emph{strongly labeling}.

These notions extend naturally to shared objects.
In detail, let $O$ be an object of type $T$.
The indistinguishability graphs of $O$ are the graphs $\Gr_T(B,s)$ such that
$s$ is a state of $O$, and
$B$ complies with the access permission map $O.m$ (that is, each thread $p$ is mapped to a unique operation $c_p$ in $B$ with $c_p \in O.m[p]$).
The indistinguishability graphs of $O$ are written $\Grs_{O}$.

%% \begin{itemize}
%% \item inc() : $\mathbb{N}$ or $\varnothing$
%% \item get() : $\mathbb{N}$
%% \item seal() : $\varnothing$
%% \end{itemize}

\begin{table*}
  \centering
  \footnotesize
  \begin{center}
    \begin{tabular}{@{}c@{~}|@{~}l@{~}|@{~}l@{~}|@{~}l@{}}
      Counter 
      &
      \begin{minipage}{20em}
        \hoare{\true}{\rmw(f,x)}{s' = f(s,x) \land \ret = s'} \\
        \hoare{\true}{\inc()}{s'  = s+1 \land \ret = s' }\\ 
        \hoare{\true}{\get()}{\ret = s} \\
        \hoare{\true}{\reset()}{s' = 0} \hfill $C_1$
        \smallskip
      \end{minipage}
      &
      \begin{minipage}{18em}
        \hoare{\true}{\rmw(f,x)}{\true} \\
        \hoare{\true}{\inc()}{s' = s+1 \land \ret = s' }\\
        \hoare{\true}{\get()}{\ret = s} \\
        \hoare{\false}{\reset()}{s' = 0} \hfill $C_2$
        \smallskip
      \end{minipage}
      &
      \begin{minipage}{14.5em}
        \hoare{\true}{ \rmw(f,x)}{\true} \\
        \hoare{\true}{\inc()}{s' = s+1} \\
        \hoare{\true}{\get()}{\ret = s} \\
        \hoare{\false}{\reset()}{s' = 0} \hfill $C_3$
        \smallskip
      \end{minipage}
      \\
      \hline
      \hline
      Set
      &
      \begin{minipage}{20em}
        \smallskip
        \hoare{\true}{\add(x)}{s' = s \union \{x\} \land \ret = x \notin s }  \\
        \hoare{\true}{\remove(x)}{s' = s \setminus \{x\} \land \ret = x \in s } \\
        \hoare{\true}{\contains(x)}{\ret = x \in s} \hfill $S_1$
        \smallskip
      \end{minipage}
      &
      \begin{minipage}{18em}
        \smallskip
        \hoare{\true}{\add(x)}{s' = s \union \{x\} } \\
        \hoare{\true}{\remove(x)}{s' = s \setminus \{x\} } \\
        \hoare{\true}{\contains(x)}{\ret = x \in s} \hfill $S_2$
        \smallskip
      \end{minipage}
      &
      \begin{minipage}{14.5em}
        \smallskip
        \hoare{\true}{\add(x)}{s' = s \union \{x\} } \\
        \hoare{\true}{\remove(x)}{\true} \\
        \hoare{\true}{\contains(x)}{\ret = x \in s} \hfill $S_3$
        \smallskip
      \end{minipage}
      \\
      \hline
      \hline
      %% List
      %% &
      %% \begin{minipage}{20em}
      %%   \smallskip
      %%   \hoare{\true}{\add(x)}{s' = s \append x \land \ret = x \notin s} \\        
      %%   \hoare{\true}{\remove(x)}{s' = s \setminus \{\first(x)\} \land \ret = x \in s} \\ % FIXME first occurence
      %%   \hoare{\true}{\contains(x)}{\true} \hfill $L_1$
      %%   \smallskip
      %% \end{minipage}
      %% &
      %% \begin{minipage}{20em}
      %%   \smallskip
      %%   \hoare{\true}{\add(x)}{s' = s \append x} \\
      %%   \hoare{\true}{\remove(x)}{s' = s \setminus \{\first(x)\}} \\
      %%   \hoare{\true}{\contains(x)}{\true} \hfill $L_2$
      %%   \smallskip
      %% \end{minipage}
      %% \\
      %% \hline
      %% \hline
      Queue
      &
      % \multicolumn{2}{l}{
        \begin{minipage}{20em}
          \smallskip
          \hoare{\true}{\offer(x)}{s' = s \append x} \\
          \hoare{\true}{\poll()}{\aif~\cardinalOf{s} = 0~\athen~\ret = \bot~\aelse~\ret = \head(s)~\land~s' = s \setminus \{\head(s)\}} \\
          \hoare{\true}{\contains(x)}{\ret = x \in s} \hfill $Q_1$
          \smallskip
        \end{minipage}
      % }
      %% &
      %% \begin{minipage}{20em}
      %%   \smallskip
      %%   \hoare{\true}{\offer(x)}{s' = s \append x} \\
      %%   \hoare{\cardinalOf{s} \geq 1}{\poll()}{s' = s \setminus \{head(s)\}}\\
      %%   \hoare{\true}{\contains(x)}{\ret = x \in s} \hfill $Q_2$
      %%   \smallskip
      %% \end{minipage}
      \\
      \hline
      \hline
      Reference
      &
      \begin{minipage}{20em}
        \smallskip
        \hoare{x \in \addrSet}{\set(x)}{s' = x} \\
        \hoare{\true}{\get()}{\ret = s} \hfill $R_1$
        \smallskip
      \end{minipage}
      &
      \begin{minipage}{18em}
        \smallskip
        \hoare{x \in \addrSet \land s = \bot}{\set(x)}{s' = x} \\
        \hoare{\true}{\get()}{\ret = s} \hfill $R_2$
        \smallskip
      \end{minipage}
      \\
      \hline
      \hline
      Map
      &
      \begin{minipage}{20em}
        \smallskip
        \hoare{\true}{\aput(k,v)}{s'[k] = v \land \ret = s[k]}  \\
        \hoare{\true}{\remove(k)}{s'[k] = \bot \land \ret = s[k]} \\
        \hoare{\true}{\contains(k)}{\ret = (s[k] \neq \bot)} \hfill $M_1$
        \smallskip
      \end{minipage}
      &
      \begin{minipage}{18em}
        \smallskip
        \hoare{\true}{\aput(k,v)}{s'[k] = v}  \\
        \hoare{\true}{\remove(k)}{s'[k] = \bot} \\
        \hoare{\true}{\contains(k)}{\ret = (s[k] \neq \bot)} \hfill $M_2$
        \smallskip
      \end{minipage}
      %% &
      %% \begin{minipage}{18em}
      %%   \smallskip
      %%   \hoare{\true}{\aput(k,v)}{s'[k] = v}  \\
      %%   \hoare{\true}{\remove(k)}{s'[k] = \bot} \\
      %%   \hoare{\true}{\contains(k)}{\true} \hfill $M_3$
      %%   \smallskip
      %% \end{minipage}
    \end{tabular}       
  \end{center}
  \caption{
    \labtab{spec}
    Adjusted versions of common data types in Hoare logic.
    We note $s$ the state of the object, $s'$ the state after the operation is applied, and $\ret$ the return value.
    If left unspecified, the state is unchanged and the return value is empty ($\bot$).
  }
  % \vspace{-2em}
\end{table*}

\paragraph{Examples}
\reffigure{ig} illustrates the indistinguishability graph of three common objects: a reference, a set, and a counter.
There are three operations $\{\opa,\opb,\opc\}$, applied from the initial state.
The formal specification of these objects and operations appear in \reftab{spec} (first column).

In total, there are $3!=6$ nodes in the indistinguishability graph.
For a reference (left side of \reffigure{ig}), the graph is complete because $\set$ does not return anything.
Hence all edges have (at least) the default label $l=\{\opa,\opb\}$.
When the same $\set$ is immediately preceding $\get$, it returns the same response.
Thus, $\opc$ labels the edges ($x_1,x_4$), ($x_2,x_3$), and ($x_5,x_6$).

For a set object, $\opc=\contains(1)$ is labeling when it is not the first operation.
Indeed, in such cases it always returns $\true$.
Moreover, when the $\add(1)$ operations are in the same order, their responses do not change.
In those cases, $\opa$ and $\opb$ are labeling.
Whatever the permutation is, the set always ends up in the same final state.
Hence all labels are strong in this example.

The rightmost figure illustrates three increments of $1$, $3$, and $5$ applied to a counter.
Each increment returns the state of the counter after it is applied.
Thus, if we permute the first two operations, the last operation will return the same value.

In \reffigure{ig}, all the graphs are connected.
As a consequence, there is a single indistinguishability class.
In general, there are at most $\cardinalOf{B}$ indistinguishability classes.
This comes from the fact that if $x[0]=y[0]$ then $\clazz{x} = \clazz{y}$.

\paragraph{Relation with consensus}
When among $k=\cardinalOf{B}$ operation, a data type $T$ has the power to distinguish up to $l$ classes, we say that $T$ is $\dist{k,l}$.
% It is easy to see that if $T$ is $\dist{k,l}$ it is at most $\dist{k,l+1}$.
%
For instance, an increment-only counter is $\dist{2,2}$ but only $\dist{3,1}$, as the third operation cannot distinguish how the preceding ones are ordered (see \reffigure{ig}).
This is also the case with a set.
In these examples, key is the transition to $\dist{k,1}$.
Below, we establish that when such a transition exists and $T$ is readable, that is any thread may retrieve the latest state of the object, $k$ is the consensus number of $T$.
\iflong
The proof is in \refappendix{proofs:consensus}.
This appendix also contains a full characterization of $\cn_1$.
\fi

\begin{restatable}{thm}{distcn}
  \labtheo{distcn}
  Consider a readable data type $T$.
  Then,
  $
  \cn(T)
  =
  \max
  ~\{ k : \exists l \geq 2 \sep T \in \dist{k,l} \}
  \union
  \{ 1 \}
  $
\end{restatable}

\subsection{Predicting scalability}
\labsection{dist:scalability}

We now connect the scalability of a shared object $O$ to the shape of its indistinguishability graphs $\Grs_O$.
First, we provide necessary and sufficient conditions on those graphs for a conflict-free implementation of $O$ to exist.
Then, we turn our attention to updates conflicts and invisible operations.
\iflong
\refappendix{proofs:scalability} contains the proofs of the results in this section.
\fi

\paragraph{Conflict freedom}
% FIXME should be about one operation instead?
% explain that commuting writes belong to that category
% get from the first set is labeling (in fact, one can copy the state localy and from there..)
For starters, we investigate one-shot objects, that is objects callable at most once per thread.
Recall from \refsection{model} that an implementation is conflict-free when no two operations access the same register, with one of them updating it.
We can establish the following result.

\begin{restatable}{prop}{oneshot}
  \labprop{oneshot}
  Assume that object $O$ is one shot.
  There exists a conflict-free implementation of $O$
  if and only if
  $B$ is labeling in every $\Gr_T(B,s)$ of $\Grs_O$.
\end{restatable}

In \cite{rule}, the authors introduce the SIM-commutativity rule.
The rule indicates when an interval of a concurrent execution can be implemented in a conflict-free manner.
This corresponds to the sufficient part of \refprop{oneshot}:
it happens when all operations $B$ in the interval are labeling in $\Gr_T(B,s)$, where $s$ is the state of the object before the interval.
\refprop{oneshot} proves that it is also necessary for deterministic objects.
This result extends to the case of long-lived objects as detailed below.

\begin{restatable}{prop}{longlived}
  \labprop{longlived}
  A conflict-free implementation of $O$ exists
  if and only if
  $B$ is strongly labeling in every $\Gr_T(B,s)$ of $\Grs_O$ with $\cardinalOf{B}=2$.
\end{restatable}

The above condition demands that any two operations of different threads commute.
This may happen when they access different shards, or \emph{segments}, in a large object.
For instance, in a key-value store, a thread can be in charge of a particular key range of the store.
%% In this case, it is possible to do a disjoint-access parallel implementation of the object:
%% shared data, which is necessarily read-only, is de-duplicated.

In \reflst{linkedin}, the \code{AtomicWriteOnceReference} class does not satisfy the premises of \refprop{longlived} when $B=\{\code{set},\code{get}\}$.
Nonetheless, this data type scales in practice (see \refsection{evaluation}).
One observation is that its indistinguishability graph is dense.
Indeed, permuting operations before (or after) the first \code{set} does not change their return values, nor the state of the object.
Our next results further delve into the relation between the shape of the indistinguishability graphs and scalability.

\paragraph{Update conflicts}
Scalability relates to the existence of conflicts among operations.
Roughly speaking, a conflict is necessary when two operations impact each other.
To track this, we introduce a moverness semantic \cite{lipton75,koskinen10}.

Consider an indistinguishability graph $\Gr_T(B,s)$ and a permutation $x=c_1{\ldots}c_{m \geq 2}$ of $B$.
An operation $c_i$ \emph{left-moves} in $x$ when $c_i$ strongly labels the edge $(x,x')$ in $\Gr_T(B,s)$, with $x'=c_1{\ldots}c_{i}c_{i-1}{\ldots}c_m$.
By extension, the operation left-moves in $\Gr_T(B,s)$ when it left-moves in all permutations of $B$.
If it left-moves in all the indistinguishability graphs of the object, we say that the operation is a \emph{left-mover}.

Blind writes, that is writes whose return value is null, are potential left-movers.
For instance, consider a set object.
If $\add$ is blind (object $S_2$ in \reftab{spec}), it left-moves with prior $\add$ operations.
Another example is $\offer$ for a queue (object $Q_1$ in \reftab{spec}).
When the queue is not empty, this operation left-moves with $\poll$.
Hence, if at most one thread invokes $\offer$, the operation is likely to left-move.
This is an example of adjusted object and \refsection{design} covers it in detail.
The proposition below states that left-movers can be implemented efficiently.

\begin{restatable}{prop}{updateconflict}
  Operation $c$ is implementable without update conflicts
  if
  $c$ left-moves in every $\Gr_T(B,s)$ of $\Grs_O$.
  Such an implementation exists only if $c$ has no consensus power.
\end{restatable}

\paragraph{Invisible operations}
Right-movers are another type of operations implementable in an efficient manner.
Such operations do not change the result of an adjacent operation when their positions are swapped.
More precisely, $c_i$ right-moves in $x=c_1{\ldots}c_{i-1}c_i{\ldots}c_m$ when $c_{i-1}$ strongly labels the edge $(x,x')$ in $\Gr_T(B,s)$, where $x'=c_1{\ldots}c_{i}c_{i-1}{\ldots}c_m$.
Notice that $c_i$ right-moves in $x$ if and only if $c_{i-1}$ left-moves in $x'$.
Right-movers are defined similarly to left-movers.

Because they have no side effects, reads are typical right-movers.
However, they are not the only one.
This is also likely to happen when updates are blind.
% For instance, with a single-reader implementation of the counter $C_3$ in \reftab{spec}.
The result below proves that right-movers are implementable in an invisible manner.

\begin{restatable}{prop}{invisible}
  If $c$ right-moves in every $\Gr_T(B,s)$ of $\Grs_O$,
  then there is an implementation of $c$ in which it is invisible.
\end{restatable}

%% To better understand how the indistinguishability graph can guide design choices and improve scalability, we explore its applications in selecting, comparing, and updating data types within an application.

\subsection{How to use the indistinguishability graph?}
\labsection{dist:usage}

Programmers can leverage the indistinguishability graph to write better concurrent code.

For instance, \reftheo{distcn} states that if the indistinguishability graph is not connected and the object is readable, then it has a consensus number higher than $1$.
As a consequence, any implementation requires internally some form of synchronization \cite{Herlihy16}.

The graph also helps with comparing shared objects.
As an example, consider that we have the choice between objects $O$ and $O'$.
We can use the indistinguishability graph to decide:
if the indistinguishability graphs of $O$ include the ones of $O'$, then it is likely that $O$ scales better.
For instance, this happens when we compare \code{AtomicWriteOnceReference} (presented in \reflst{linkedin}) with \code{AtomicReference} from the JDK.

Building upon such observations, the next logical step is to adjust an object to a specific usage in a given piece of code.
We further delve into this idea in the next section.

% \vspace{-1em}%
\section{Adjusted Objects}
\labsection{adjusted}

Adjusted object are objects tailored for a specific use in a program.
They are more scalable than their vanilla counterparts because they have denser indistinguishability graphs.
Below, we define adjusted objects and establish this result.
We also introduce a methodology to adjust an object by controlling how threads access it and/or altering its interface.

% \vspace{-1em}%
\subsection{Principles}
\labsection{adjusted:principle}

\paragraph{Definition}
In many situations, access to a shared object is not symmetrical.
A base example is a single-writer multi-reader (SWMR) register.
A unique thread (the writer) may invoke operation $\awrite$ on the register.
The other threads are readers and can only call $\aread$.
Another common situation is a queue with one producer and one consumer.
The producer (resp., consumer) only invokes $\offer$ (resp., $\poll$).
In such cases, the object is adjusted appropriately.
In detail, recall that we write $O.m$ the access permission map of a shared object $O$.
With this adjustment, $O.m$ is restricted to how the threads use operations in the specific use case.
For instance, a SWMR register has a single thread $w$, the writer, with $\awrite \in O.m[w]$.
Every other thread $p$ satisfies $O.m[p] = \{\aread\}$.

A second adjustment is to alter the interface.
For instance, as shown in \reffigure{mining}, the return value of a call can be ignored in part of a program.
% That is, the operation is blind in these use cases.
Another situation is when the interface is not used in full, but narrowed to a handful of operations.
All these modifications are captured with subtyping.

To formally define adjusted objects, we follow the work of \citet{subtyping} that introduces types and subtypes.
Recall that a data type $S$ is a subtype of $T$ when it satisfies the following two conditions:
there is an abstraction function that maps a state of $S$ to a state of $T$ which maintains state invariants; and
$S$ preserves the operations of $T$.
We say that $S$ is a \emph{narrow} subtype of $T$ if $S$ is a subtype of $T$ and $T$ implements only the operations $S$ defines.
Based on this, we define the notion of adjusted object as follows.

\begin{definition}
  \labdef{adjusted}
  Let $O$ and $O'$ be shared objects.
  $O$ \emph{adjusts} $O'$ when $O'.T$ is a narrow subtype of $O.T$ and $O.m \subseteq O'.m$.
\end{definition}

\paragraph{Examples}
% An object is adjusted when its access is restricted, or its interface altered.
To illustrate this principle, consider a reference object, as commonly found in programming languages.
Its state consists of one variable $s$, initially set to null ($\bot$).
The interface offers two operations:
\begin{inparaenumorig}[]
\item $\get$ returns the value of $s$, and
\item $\set(x)$ changes $s$ to $x$, provided $x$ is an address.
\end{inparaenumorig}
We can adjust this object by requiring that it is write-once.
In which case, we add the precondition $v=\bot$ to $\set$, as detailed in \reftab{spec}.
This table also presents several adjusted versions of common data types.
For each data type, its operations are specified in Hoare logic \cite{hoare}.
The reference object corresponds to $R_1$, while its adjusted counterpart is $R_2$ (fourth row in \reftab{spec}).

Another approach to adjust the reference is to have at most one thread that call $\set$ (single writer).
In this case, the access permission map of the object is changed appropriately.

\paragraph{Benefits}
When $O$ adjusts $O'$, $O'$ is at least as strong as $O$.
This is a consequence of subtyping
Formally,

\begin{restatable}{prop}{computability}
  \labprop{computability}
  Consider that object $O$ adjusts object $O'$.
  Every distributed task that is solvable with $O$ (and registers) is also solvable with $O'$.
\end{restatable}

\iflong
The proof is in \refappendix{proofs:adjusted}.
\fi
As expected, the converse proposition does not hold.
We illustrate this with \reftab{spec}.
In this table, $S_2$ adjusts $S_1$ by nullifying the return values of $\add$ and $\remove$.
Using \reftheo{distcn}, one can observe that $S_2$ is in $\cn_1$.
On the contrary, it is easy to see that the write operations of $S_1$ both have consensus power.

From what precedes, adjusting an object may reduce the amount of conflicts necessary to implement it.
As a consequence, this should \emph{boost scalability}.
This key property is established below.
It shows that adjusting an object increases the number of edges in the indistinguishability graphs.
\iflong
The proof is also in \refappendix{proofs:adjusted}.
\fi

\begin{restatable}{prop}{scalability}
  \labprop{scalability}
  Assume that $O$ adjusts $O'$.
  Let $s$ be a common state and $B$ be a bag of operations such that $B$ complies with both $O.m$ and $O'.m$.
  Then, $G_{O'.T}(B,s) \subseteq G_{O.T}(B,s)$.
\end{restatable}

Adjusting an object open opportunities for better scalability.
In the next section, we present a methodology to achieve this.
Later, we use this methodology in the construction of \sys, an efficient library of adjusted objects for Java.

\subsection{Methodology}
\labsection{adjusted:howto}

Adjusting an object involves either altering its sequential specification, or restricting access to it.
\reffigure{tree} depicts examples of both techniques.
In this figure, a shared object is a pair $(X,Y)$, where $X$ is a data type from \reftab{spec} and $Y$ the permission map.
\ALL stands for the default permissions, when threads access the full interface.
Below, we explain the various adjustments in this figure.

% delete
Deleting an operation ({\color{del_op_color}{$\xrightarrow{\text{d}}$}}) adjusts an object.
For instance, this happens with $\reset$ for the counter object.
In $C_2$, the precondition is set to $\false$.
We obtain the same result when the postcondition is voided.
This is illustrated with the read-modify-write operation ($\rmw$).
In both cases, the operation is labeling in the indistinguishability graph as it fails silently.

% stronger pre
Another approach is a stronger precondition ({\color{SPC_color}{$\xrightarrow{\text{p}}$}}).
As noted earlier, this occurs with $R_2$ which is write once in \reftab{spec}.
The rationale is the same as for a deletion:
Suppose that $d$ has a stronger precondition than $c$ and $c$'s precondition is satisfied in a permutation $x$, while $d$ is not.
Then, $d$ fails silently in $x$ which increases its likeliness to be labeling.

% weaker post
A third option is to have a weaker postcondition.
For instance, voiding the return value of a write makes it blind.
In \reftab{spec}, this happens with operation $\add$ in $S_2$ ({\color{no_ret_val_color}{$\xrightarrow{\text{r}}$}}).
This alteration densifies the indistinguishability graph, increasing the moverness of other operations (see~\refsection{dist:scalability}).

% commuting side-effects
One can also restrict how an object is accessed.
For instance, writes can be commuting ({\color{MCW_color}{$\xrightarrow{\text{c}}$}}).
In \reffigure{tree}, $(S_{3}, \CWMR)$, requires that all write operations, namely $\add$ and $\remove$, commute when executed by distinct threads.
Consequently, permutations in the indistinguishability graph that only differ in the way they are ordering these operations are connected. % FIXME

% asymetric usage
Operations may also be asymmetric, that is some threads are allowed to only call certain operations ({\color{MWSR_color}{$\xrightarrow{\text{m}}$}}).
Objects can be single-writer multi-reader (\SWMR), or multi-writer single-reader (\MWSR).
Writers may also only issue commuting operations (\CWMR and \CWSR) on the object.
All of this densifies the object's indistinguishability graphs.

As seen in \reffigure{tree}, several adjustments can be combined.
In the most general case, they form an acyclic directed graph.

\begin{figure}
  \centering
  \begin{tikzpicture}[node distance={12mm}, font=\scriptsize, text opacity=1, fill opacity=0.8, main/.style = {draw=white, rectangle}, inner sep=1pt, minimum size=1.5em]
    \node[main] (1) {$(R_{1}, \ALL)$};
    \node[main] (2) [right = 0.5cm of 1, yshift = .5cm] {$(R_{2}, \ALL)$};
    \node[main] (3) [right = 0.5cm of 2, yshift = -.5cm] {$(R_{2}, \SWMR)$};
    \node[main] (4) [below = 0.5cm of 2] {$(R_{1}, \SWMR)$};
    \draw[SPC_color, thick, ->] (1) -- (2)node[midway, above, yshift=-.3em] {p};
    \draw[MWSR_color, thick, ->] (2) -- (3)node[midway, above, yshift=-.3em] {m};
    \draw[MWSR_color, thick, ->] (1) -- (4)node[midway, above, yshift=-.3em] {m};
    \draw[SPC_color, thick, ->] (4) -- (3)node[midway, above, yshift=-.3em] {p};

    \node[main] (1a) [below = .8cm of 1] {$(S_{1}, ALL)$};
    \node[main] (2a) [right = 0.5cm of 1a] {$(S_{2}, \ALL)$};
    \node[main] (3a) [right = 0.5cm of 2a] {$(S_{3}, \ALL)$};
    \node[main] (4a) [right = 0.5cm of 3a] {$(S_{3}, \CWMR)$};
    \node[main] (5a) [right = 0.5cm of 4a] {$(S_{3}, \CWSR)$};
    \draw[no_ret_val_color, thick, ->] (1a) -- (2a) node[midway, above, yshift=-.3em] {r};
    \draw[del_op_color, thick, ->] (2a) -- (3a) node[midway, above, yshift=-.3em] {d};
    \draw[MCW_color, thick, ->] (3a) -- (4a) node[midway, above, yshift=-.3em] {c};
    \draw[MWSR_color, thick, ->] (4a) -- (5a) node[midway, above, yshift=-.3em] {m};

    \node[main] (1b) [below = .7cm of 1a, xshift=-.5cm] [right] {$(C_{1}, ALL)$};
    \node[main] (2b) [right = 0.5cm of 1b] {$(C_{2}, \ALL)$};
    \node[main] (3b) [right = 0.5cm of 2b] {$(C_{3}, \ALL)$};
    \node[main] (4b) [right = 0.5cm of 3b] {$(C_{3}, \CWSR)$};
    \draw[del_op_color, thick, ->] (1b) -- (2b) node[midway, above, yshift=-.3em] {d};
    \draw[no_ret_val_color, thick, ->] (2b) -- (3b) node[midway, above, yshift=-.3em] {r};
    \draw[MWSR_color, thick, ->] (3b) -- (4b) node[midway, above, yshift=-.3em] {m};
  \end{tikzpicture} 
  \caption{
    \labfigure{tree}
    We combine subtyping (\adjustP, \adjustR) and restrictions of access (\adjustD, \adjustC, \adjustM) to adjust an object.
    \iflong\else\vspace{-1em}\fi%
  }
\end{figure}

\section{The \sys Library}
\labsection{design}

This section presents a library of adjusted objects for the Java language called \sys.

\subsection{Overview}
\labsection{design:overview}

The \sys library provides drop-in replacements for the package \code{java.util.concurrent} of the JDK---hereafter, abbreviated in \JUC.
It contains $\sim$\sloc{11,000} \cite{dego}.
\sys includes collections (e.g., set, list) and maps, as well as other base objects (e.g., reference).
Access permissions are the ones listed earlier: \ALL, \SWMR, \MWSR, \CWMR and \CWSR.
%
%% The objects listed in \reftab{spec} are all implemented in the \sys library.
%% We will detail them shortly.
For an object, \sys does not implement all the access permissions.
Instead, it provides hooks and useful abstractions to this end.
One such abstraction is a segmentation that we cover below.

\subsection{Segmentation}
\labsection{design:segmentation}

Adjusted objects that are multi-writer collections are built with \emph{segmentations}.
A segmentation is a (dynamic or static) array of objects.
Each object in the array is \SWMR and own by one thread;
it is called a \emph{segment}.
A segmentation allows to efficiently implement an adjusted object for which writes are commuting (that is, either \CWMR or \CWSR).
For instance, consider the adjusted object $(C_3,\CWSR)$ at the bottom of \reffigure{tree}.
In this case, each segment stores a counter.
When it executes $\inc$, a thread updates its segment appropriately.
To read the counter, a thread sums up all entries.
(Notice that if $\inc$ are unitary, such a read is linearizable.)

Different forms of segmentation are available in \sys.
In the case of a \code{BaseSegmentation}, the mapping between threads and segments is static.
It is implemented using an instance of the class \code{CopyOnWriteArray} and a \code{ThreadLocal} variable.
Consequently, to execute a read, e.g., when iterating over the collection, the thread needs to traverse all segments.
This makes the \code{BaseSegmentation} interesting in workloads where the object is predominantly accessed through writing.

Two other types of segmentation also exist: \code{HashSegmentation} and \code{ExtendedSegmentation}.
With these types, each data item added to a segmentation carries an information to identify the segment where it is stored.
In \code{HashSegmentation}, an item is stored in the segment corresponding to its hash code.
In the case of \code{ExtendedSegmentation}, when an item is inserted for the first time, it retains the segment where it was stored.
This is implemented using a dedicated field in the item.
These two segmentations eliminate the need to iterate over all segments for a lookup.

The next section explains how each segment is implemented.
It also covers other adjusted objects in the library.

\subsection{Implementation details}
\labsection{design:details}

To construct adjusted objects, we rely on the \code{VarHandle} mechanics. % such as the ones in a segmentation
\code{VarHandles} are introduced in JEP~193.
They permit to manipulate a Java variable while controlling its consistency at fine grain.
There are four models: \code{Plain}, \code{Opaque}, \code{Release/Acquire}, and \code{Volatile}, ordered from the weakest to the strongest.

\code{Plain} offers minimal guarantees as operations may be re-ordered. % FIXME what are they?
This model corresponds to the vanilla Java memory model.
\code{Opaque} ensures that operations on a single variable form a partial order, with writes totally ordered, thereby guaranteeing eventual consistency.
It also guarantees that the bits from multiple writes do not interleave (for writes larger than a single memory word).
The \code{Release/Acquire} model ensures causality.
For instance, if a thread $p$ changes a variable $a$ then raises a flag using \code{setRelease}, a thread $q$ must see what happened to $a$ if it sees the flag raised with \code{getAcquire},
This model is typically used in situations where only one thread can write to a variable, while others read it.
Operation in \code{Volatile} mode are linearizable.
In \sys, \code{VarHandles} are used in various \SWMR constructions.
Below, we detail a hash table and a skip list.

To construct an \SWMR object, we start from a sequential code taken, e.g., from the JDK.
This code is then extended to support concurrent readers.
For instance, when a new entry is added to an instance of \code{SWMRHashMap}, if its key is already there, it is updated with \code{setVolatile}.
Otherwise, a new node is created and inserted atomically in the appropriate bin.
Upon a call to \code{resize}, nodes cannot be re-ordered on the fly due to potential readers.
Instead, they are de-duplicated and inserted into the new binned array backing the hash table.

For \code{SWMRSkipListMap}, a new node \code{n} is added as follows:
At each level, \code{n.next} is set to \code{m}, where \code{m} is the smallest higher node than \code{n}.
Then, the \code{next} pointer of the node preceding \code{m} is changed to \code{n} using \code{setRelease}.
The last level uses \code{setVolatile} to ensure that the insertion is globally visible.

Our library also includes a multi-producer single-consumer queue.
This queue is implemented without compare-and-swap when invoking $\poll$.
Instead, the thread moves the head of the queue appropriately.
The $\offer$ operation is identical to the one in the JDK (that is, in \code{ConcurrentLinkedQueue}).

Write-once shared objects are common in applications.
For references, we use the Concurrentli implementation (\reflst{linkedin}).
For other objects, \sys uses a RCU-like mechanism, using a full copy of the object and swapping the reference atomically with \code{setVolatile}.

Notice that in \sys, read operations over adjusted objects are as consistent as in JUC.
In particular, iterating over a collection provides the same weak guarantees.

\begin{figure}[t]
  \begin{minipage}[b]{0.3\textwidth}
    \begin{tabular}{l}
      \begin{tikzpicture}
  \begin{axis}[
      ybar,
      bar width=0.05\linewidth,
      width=0.8\textwidth,
      height=0.8\textwidth,
      xlabel={Years},
      x label style={anchor=north, below=-3mm},
      ylabel={\#~Declarations},
      y label style={
        font=\footnotesize
      },
      ymin=0, 
      ymax=150, 
      ytick={0, 25, 50, 75, 100, 125, 150},
      y tick label style={
        font=\footnotesize
      },
      xtick={2015,2018,2021,2024},
      legend style={at={(0.5,-0.25)}, anchor=north, legend columns=2},
      x label style={
        anchor=south, 
        yshift=-2em,
        font=\footnotesize
      },
      axis y line*=left, % Set left y-axis
      axis x line*=bottom,
      x tick label style={
        /pgf/number format/1000 sep=, rotate=45, 
        anchor=east,
        font=\footnotesize
      },
      every node near coord/.style={/pgf/number format/1000 sep=, rotate=45, yshift=-0.2em, xshift=0.6em, font=\tiny},
      nodes near coords
    ]

    \addplot[ 
      fill=orange!50!white, 
      draw=RoyalBlue, 
    ] coordinates {
      (2015, 46.6) (2018, 77.7)
      (2021, 96.8) (2024, 116.7)
    };

  \end{axis}

  \begin{axis}[
      width=0.8\textwidth,
      height=0.8\textwidth,
      ymin=0.5, 
      ymax=1, 
      ytick={0.5, 0.6, 0.7, 0.8, 0.9, 1},
      y tick label style={
        font=\footnotesize
      },
      axis y line*=right, % Set right y-axis
      ylabel near ticks, 
      yticklabel pos=right,
      ylabel={Proportion (\%)},
      axis x line=none,
      y label style={anchor=south, yshift=-2em,
        font=\footnotesize
      },
      legend style={at={(0,-0.35)}, anchor=north, legend columns=2},
    ]

    \addplot[RoyalBlue, thick, mark=*] coordinates {
      (2015, 0.5917169965) (2018, 0.6802743046) (2021, 0.7015159749) (2024, 0.7526356237)
    };

  \end{axis}
\end{tikzpicture}\vspace{2em}\\
      \input{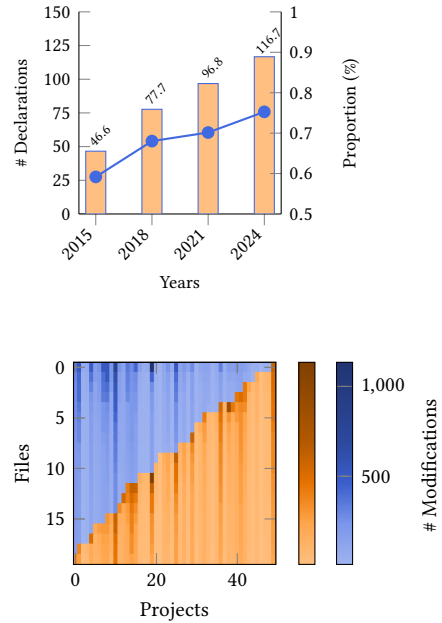}
    \end{tabular}
  \end{minipage}
  \caption{
    \labfigure{evolution}
    Number of declarations of shared objects in the ASF projects:
    \emph{(top)} on average over time, and
    \emph{(bot)} in the 20 most modified files for the latest version of each project.
  }
  \vspace{-1em}
\end{figure}

\section{Evaluation}
\labsection{evaluation}

This section analyzes the use of shared objects in practice.
Then, it evaluates the adjusted objects of the \sys library using micro-benchmarks and a social network application.

\subsection{How shared objects are used?}
\labsection{evaluation:usage}

We analyze the usage of shared objects in parallel Java programs.
To this end, we mined 50 prominent projects from the \textit{Apache Software Foundation} (ASF).
Our study covers the following four data types: \code{ConcurrentLinkedQueue}, \code{ConcurrentSkipListSet}, as well as \code{AtomicLong} and \code{ConcurrentHashMap}.

First, we examine the number of declarations and their proportions within each project.
\reffigure{evolution}(top) depicts the results for \code{ConcurrentHashMap}. % FIXME legend in the figure
Our findings are similar for the other data types.
Overall, we observe a gradual increase in the number of declarations.
On average, this represents less than 1\% of the total declarations in a project (second y-axis of the figure).
Nonetheless, relative to the number of lines of code, this reflects a 25\% growth over ten years.

Although shared objects are used parsimoniously, they may still be pivotal in a code base.
To evaluate this, we analyze for each ASF project the 20 most modified files over the past ten years.
For each file, we check if it uses objects from the \texttt{java.util.concurrent} library or not.
The results are reported in \reffigure{evolution}(bot).

In \reffigure{evolution}(bot), the x-axis represents the 50 projects.
Each row corresponds to the 20 most modified files in a project.
Files using (at least) one shared object are in blue, while the others are in orange.
Color intensity reflects the frequency of modifications, with darker shades indicating more commits.
As illustrated in the figure, nearly half of the most modified files involve \texttt{java.util.concurrent} objects.
This underlines that despite their low declaration count, these objects are central to program development.

We now study the extent to which the interface is used.
For this, we analyze the methods called, the classes in which these calls occur, and whether the return values are used.

\reffigure{pie} illustrates the usages we found for the aforementioned shared objects.
For clarity, only methods representing more than $10\%$ of the total calls are reported.
The other methods are grouped into the category \textit{others}.
The number in bracket indicates how many methods are in this category.

\begin{figure}[t]
  \begin{tabular}{@{}l@{~~}l@{}}
  \begin{tikzpicture}[scale=0.8]
    \def\printonlylargeenough#1#2{\unless\ifdim#2pt<#1pt\relax
      #2\printnumbertrue
      \else
      \printnumberfalse
      \fi}
    \newif\ifprintnumber
    \pie[rotate=120, radius=1.3, before number=\printonlylargeenough{10}, after number=\ifprintnumber\%\fi]{
      26.6/\scriptsize get, 17.8/\scriptsize put, 13.1/\scriptsize remove, 42.5/\scriptsize \textit{\shortstack[c]{others\\(89)}}
    }
    \node () at (-.2,-2) {\bf\scriptsize\code{ConcurrentHashMap}};
  \end{tikzpicture}
  &
  \begin{tikzpicture}[scale=0.8]
    \def\printonlylargeenough#1#2{\unless\ifdim#2pt<#1pt\relax
      #2\printnumbertrue
      \else
      \printnumberfalse
      \fi}
    \newif\ifprintnumber
    \pie[rotate=120, radius=1.3, before number=\printonlylargeenough{10}, after number=\ifprintnumber\%\fi]{
      31.9/\scriptsize add, 20.8/\scriptsize remove, 19.6/\scriptsize contains, 27.7/\scriptsize \textit{\shortstack[c]{others\\(15)}}
    }
    \node () at (-.2,-2) {\bf\scriptsize\code{ConcurrentSkipListSet}};
  \end{tikzpicture}
  \\
  \begin{tikzpicture}[scale=0.8]
    \def\printonlylargeenough#1#2{\unless\ifdim#2pt<#1pt\relax
      #2\printnumbertrue
      \else
      \printnumberfalse
      \fi}
    \newif\ifprintnumber
    \pie[rotate=100, radius=1.3, before number=\printonlylargeenough{10}, after number=\ifprintnumber\%\fi]{
      28.8/\scriptsize add, 26.1/\scriptsize size, 11.4/\scriptsize poll, 33.7/\scriptsize \textit{\shortstack[c]{others\\(24)}}
    }
    \node () at (-.2,-2) {\bf\scriptsize\code{ConcurrentLinkedQueue}};
  \end{tikzpicture}
  &
  \begin{tikzpicture}[scale=0.8]
    \def\printonlylargeenough#1#2{\unless\ifdim#2pt<#1pt\relax
      #2\printnumbertrue
      \else
      \printnumberfalse
      \fi}
    \newif\ifprintnumber
    \pie[rotate=115, radius=1.3, before number=\printonlylargeenough{10}, after number=\ifprintnumber\%\fi]{
      36.9/\scriptsize get, 15.5/\scriptsize incAndGet, 14.1/\scriptsize set, 33.5/\scriptsize \textit{\shortstack[c]{others\\(131)}}
    }
    \node () at (-.2,-2) {\bf\scriptsize\code{AtomicLong}};
  \end{tikzpicture}
\end{tabular}
  \caption{
    \labfigure{pie}    
    Most used methods in the ASF projects.
  }
  \vspace{-1em}
\end{figure}

In \reffigure{pie}, we observe that only a fraction of the interface of each object is used, with some of the methods being called much more frequently than others.
In detail, out of the 134 different methods available in \code{AtomicLong}, just 3 of them account for $66.5\%$ of the calls. 
Similarly, from the 92 different methods in \code{ConcurrentHashMap}, only 3 account for $57.5\%$ of the usage.
This is equal to $66.3\%$ and $72.3\%$ for the classes \code{ConcurrentLinkedQueue} and \code{ConcurrentSkipListSet}, respectively.

\reffigure{mining}(right) depicts the usage of \code{AtomicLong} in Apache Cassandra.
The y-axis lists the methods used in this project.
For each method, the figure indicates whether the return value is used (\textcolor{red}{+}) or not (\textcolor{blue}{$\times$}).
Some methods tend to be invoked together, such as \code{get} and \code{incrementAndGet}.
For \code{AtomicLong}, the two most frequently used methods are called together in $29.5\%$ of the classes 
This value changes to $40\%$ and $45.2\%$ with \code{ConcurrentSkipListSet} and \code{ConcurrentHashMap}, respectively.
In addition, we notice that in many cases (e.g., for \code{incrementAndGet} and \code{addAndGet}), these calls do not use the return values.
Similar observations can be drawn with other programs and for the other data types.

All in all, our study reveals that there are opportunities to adjust a shared object for a specific usage in a program.

\begin{keytakeaway}[usage] 
  Shared objects are rare (<1\%) but critical, with 25\% growth in a decade.
  Each program has a limited use of an object's interface.
\end{keytakeaway}

\subsection{Micro-benchmarks}
\labsection{evaluation:micro}

We now evaluate the objects proposed in the \sys library using micro-benchmarks.

\begin{figure*}[t]
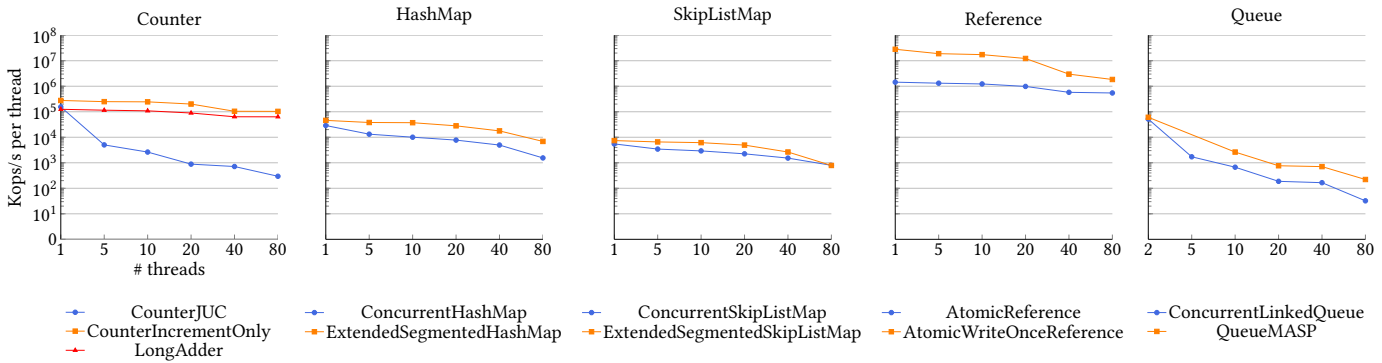

  \centering
  \hspace{-2.5em}%
  \resizebox {1.04\textwidth} {!} {
    \begin{tabular}[t]{c c c c c}
    \begin{tikzpicture}[baseline={(current axis.north)}]
        \input{results-counter-counter.tex}
    \end{tikzpicture}
    &
    \begin{tikzpicture}[baseline={(current axis.north)}]
        \input{results-map-hashmap.tex}
    \end{tikzpicture}
    &
    \begin{tikzpicture}[baseline={(current axis.north)}]
        \input{results-map-skiplistmap.tex}
    \end{tikzpicture}
    &
    \begin{tikzpicture}[baseline={(current axis.north)}]
        \input{results-reference-reference.tex}
    \end{tikzpicture}
    &
    \begin{tikzpicture}[baseline={(current axis.north)}]
        \input{results-queue-queue.tex}
    \end{tikzpicture}
    \\
\end{tabular}
  }
  ~~~~
  \caption{
    \labfigure{perfjava:globfig}
    Performance of the objects in \sys versus their counterparts in \code{java.util.concurrent} under high contention.
  }
\end{figure*}

\paragraph{Settings}
The evaluation uses a machine with 362GB of DRAM and 4 Intel Xeon Gold 6230 CPUs, each having 40 (hyperthreaded) cores.
The machine runs Linux 6.1.0-18-amd64 and Java 22-oracle (OpenJDK).
Benchmarks that execute with 40 threads (or below) are run on a single socket.

For \sys, we evaluate the following objects:
\code{QueueMASP} which is a multi-producer single-consumer linked queue $(Q_1, \MWSR)$;
\code{CounterIncrementOnly}, which corresponds to $(C_3,\CWSR)$;
we implement $(M_1,\CWMR)$ using an extended segmentation called \code{ExtendedSegmentedHashMap};
the same segmentation is used in an efficient skiplist map (\code{ExtendedSegmentedSkipListMap}); and
\code{AtomicWriteOnceReference} is the class presented in \reflst{linkedin}.
These adjusted objects are compared against their counterparts in the JDK.
Such objects are found in the \code{java.util.concurrent} package, or \JUC hereafter.

% \iflong\else\vspace{-.2em}\fi%
\paragraph{Methodology}
We run each benchmark for \second{60} after a \second{30} warm-up phase.
Each reported value is an average over 30 tests.
An operation (method call) executes $1,000$ times and the measure is then averaged.
% This avoids spending too many CPU cycles in \code{System.nanoTime}.

For data structures (collections and maps), threads execute commuting updates.
This corresponds to a common pattern where each request is routed to a particular thread (using, e.g., the hash of the data item). % FIXME
A data structure starts initially with 16K data items, and stores up to 32K possible values.
Each item is generated randomly with a uniform distribution.
Snapshot-like operations (such as iterators) are not tested.
We also do not evaluate update operations that are composite (e.g., \code{putAll}).

First, we benchmark the adjusted objects in \reftab{spec} under high-contention scenarios.
% Then, we evaluate the maps with different update ratios.
%
This is equivalent to running Synchrobench \cite{synchrobench} with the following parameters:
\textit{-u100-f1-l60000-s0-a0-i[16384]-r[32768]-W30-n30}.
For each benchmark, we compute the throughput \emph{per thread}.
Since the results are per thread, an horizontal line indicates that the object scales perfectly.
Conversely, a line that goes down quickly indicates poor scalability.
If the line goes up, the object is hyper-scalable.
This may happen due to benefit of sharing the CPU cache among hardware threads.

When multiple threads compete for access to the same object, one thread may be able to obtain a lock to utilize the object, while the others wait until the lock is available.
This phenomenon can be observed by examining the \textit{cycle\_activity.stalls\_total} event in the \textit{perf} performance counter, which records the number of cycles during the execution of the program where at least one hardware thread is waiting.

We seek a correlation between throughput and the number of cycles with at least one thread waiting.
To this end, we employ the \textit{Pearson correlation coefficient}.
The coefficient varies between $-1$ and $1$ for two groups of values $A$ and $B$.
The closer the coefficient is to $1$, the more it indicates that if $A$ increases, $B$ also increases linearly.
Conversely, if the coefficient approaches $-1$, then as $A$ increases, $B$ decreases.
A coefficient close to $0$ signifies a lack of correlation.

%\iflong\else\vspace{-.2em}\fi%
\paragraph{High contention}
\reffigure{perfjava:globfig} compares \sys versus JUC in a high-contention scenario.
For maps, \code{put} is the unique operation called.
For counters, threads repeatedly call \code{incrementAndGet}.
The workload for queues is a producer-consumer:
all the threads perform \code{offer}, but one that executes just \code{poll}.
For references, threads call continuously \code{get}, once the object is initialized.

\code{CounterIncrementOnly} reduces by 80\% the number of \textit{cycle\_activity.stalls\_total} events wrt. \code{AtomicLong}.
This is the main reason for its better performance.
In this experiment, the Pearson coefficient is $-0.93$.
Hence, as the number of \textit{cycle\_activity.stalls\_total} events increases, the throughput decreases.
This phenomenon is also observed for other objects, with an average Pearson coefficient of -0.88.
When 80 threads access the object concurrently, \code{CounterIncrementOnly} is \speedup{350} faster than \code{AtomicLong}.
JUC also includes a \code{LongAdder} class which employs an approach similar to ours to alleviate contention.
Internally, \code{LongAdder} relies on \code{Striped64} which uses \code{weakCompare\&Set} for updates.
Because there is a single owner per segment, \code{CounterIncrementOnly} exclusively relies on longs.
This explains why its performance is slightly higher.

\code{ExtendedSegmentedHashMap} is up to \speedup{4.4} faster than its counterpart \code{ConcurrentHashMap}.
This improvement is explained with \textit{cycle\_activity.stalls\_total} events that are 23\% fewer for the adjusted object.
Similarly, the skip list map in \sys is up to \speedup{1.7} faster than its \JUC counterpart.
This is also explained by the fact that threads are less stalled in the implementation.

%% \code{AtomicReference} executes 37\% more instructions per cycle than \code{AtomicWriteOnceReference}.
%% Nevertheless, it is  faster (on average ).
\code{AtomicReference} employs a volatile variable which requires to use barriers.
When a read is performed, two barriers are applied after the value contained in the variable is loaded:
a \code{LoadLoad} barrier to ensure that the read is not reordered with reads performed after the barrier,
and a \code{LoadStore} barrier to ensure that writes performed after the barrier are not ordered before the reads that happened before.
A third \code{StoreLoad} barrier is used to guarantee that operations on the same volatile variable are not re-ordered.
However, since this barrier is applied subsequent to a write, it is not needed in this benchmark.

As seen in \reflst{linkedin}, the adjusted version maintains a copy of the reference.
This eliminates the need to use a barrier when reading and brings a substantial performance improvement:
\speedup{11.5} on average in \reffigure{perfjava:globfig}.

As noted in \refsection{design:details}, our adjusted queue uses a simpler mechanism to update the head when a single thread executes $\poll$ operations.
This explains why \code{QueueMASP} is an average \speedup{4.3} faster than \code{ConcurrentLinkedQueue} in \reffigure{perfjava:globfig}.

\begin{figure*}[t]
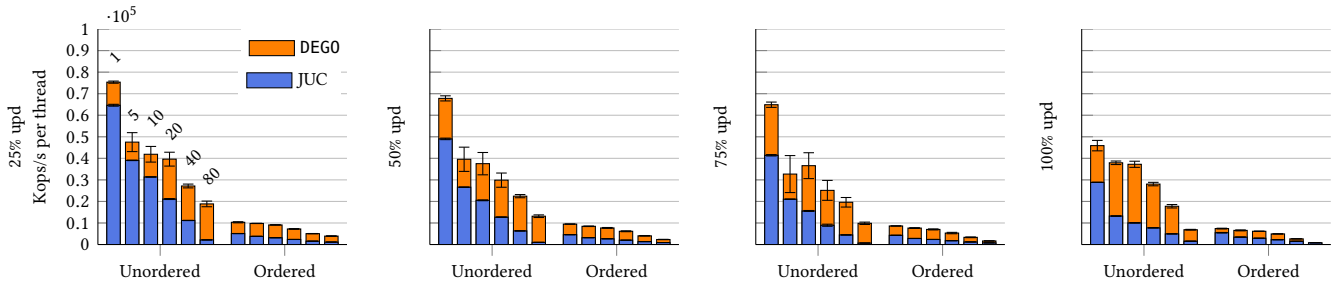

  \centering
  \begin{tabular}[t]{c c c c}
    \hspace{-0.05\textwidth}
    \begin{tikzpicture}
        \input{results-mixedworkload-25}
    \end{tikzpicture}
    &
    \begin{tikzpicture}
        \input{results-mixedworkload-50}
    \end{tikzpicture}
    &
    \begin{tikzpicture}
        \input{results-mixedworkload-75}
    \end{tikzpicture}
    &
    \begin{tikzpicture}
        \input{results-mixedworkload-100}
    \end{tikzpicture}
    \\
\end{tabular}
  %\iflong\else\vspace{-1em}\fi%
  \caption{
    \labfigure{evaluation:mixed}
    Varying the update ratio for a hash table (Unordered) and a skip list (Ordered).
  }
  %\iflong\else\vspace{-1em}\fi%
\end{figure*}

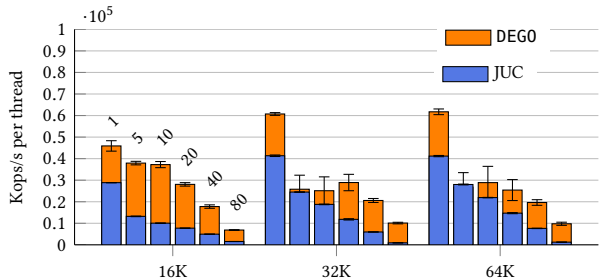
\begin{figure}
    \begin{tikzpicture}
        \begin{axis}[
width=\linewidth,
bar width=0.03\linewidth,
height=0.25\textwidth,
ybar,
ylabel={Kops/s per thread},
ytick={0, 1.0E4, 2.0E4, 3.0E4, 4.0E4, 5.0E4, 6.0E4, 7.0E4, 8.0E4, 9.0E4, 1.0E5},
ymin=0,
ymax=100000,
% symbolic x coords={LeftMargin, 16k, Space, 32k, Space, 65k, RightMargin},
ymajorgrids,
axis x line*=bottom,
axis y line*=left,
xtick={0,1,2},
xticklabels={16K,32K,64K},
legend style={draw=none, at={(0.8,1.05)}, anchor=north, font=\footnotesize},
enlarge x limits=0.3,
% x=0.5\textwidth,
% scale only axis,
% xmin=LeftMargin,
% xmax=RightMargin,
cycle list={ {fill=orange}, {fill=orange} },
area legend,
ylabel style={font=\footnotesize},
xlabel style={font=\footnotesize},
title style={font=\footnotesize},
y tick label style={font=\footnotesize},
x tick label style={font=\footnotesize},]
\addplot+[nodes near coords, point meta=explicit symbolic, every node near coord/.append style={yshift=6pt, xshift=-0.1cm, rotate=45, anchor=west, font=\footnotesize}, error bars/.cd, y dir=both, y explicit] coordinates {
 (0,45903) +- (2423, 2422) [1]
 (1,60728) +-(638, 639) 
 (2,61738) +-(1317, 1318) 
};
\addplot+[nodes near coords, point meta=explicit symbolic, every node near coord/.append style={yshift=6pt, xshift=-0.1cm, rotate=45, anchor=west, font=\footnotesize}, error bars/.cd, y dir=both, y explicit] coordinates {
 (0,37935) +- (820, 819) [5]
 (1,25809) +-(6508, 6509) 
 (2,26759) +-(6754, 6753) 
};
\addplot+[nodes near coords, point meta=explicit symbolic, every node near coord/.append style={yshift=6pt, xshift=-0.1cm, rotate=45, anchor=west, font=\footnotesize}, error bars/.cd, y dir=both, y explicit] coordinates {
 (0,37240) +- (1422, 1423) [10]
 (1,25118) +-(6431, 6431) 
 (2,28874) +-(7549, 7549) 
};
\addplot+[nodes near coords, point meta=explicit symbolic, every node near coord/.append style={yshift=6pt, xshift=-0.1cm, rotate=45, anchor=west, font=\footnotesize}, error bars/.cd, y dir=both, y explicit] coordinates {
 (0,28060) +- (805, 804) [20]
 (1,28915) +-(3804, 3804) 
 (2,25416) +-(4863, 4864) 
};
\addplot+[nodes near coords, point meta=explicit symbolic, every node near coord/.append style={yshift=6pt, xshift=-0.1cm, rotate=45, anchor=west, font=\footnotesize}, error bars/.cd, y dir=both, y explicit] coordinates {
 (0,17765) +- (794, 794) [40]
 (1,20588) +-(918, 919) 
 (2,19646) +-(1304, 1305) 
};
\addplot+[nodes near coords, point meta=explicit symbolic, every node near coord/.append style={yshift=6pt, xshift=-0.1cm, rotate=45, anchor=west, font=\footnotesize}, error bars/.cd, y dir=both, y explicit] coordinates {
 (0,6870) +- (175, 176) [80]
 (1,10108) +-(363, 364) 
 (2,9778) +-(729, 729) 
};
\legend{\sys}
\end{axis}

\begin{axis}[
width=\linewidth,
bar width=0.03\linewidth,
height=0.25\textwidth,
ybar,
ytick={0, 1.0E4, 2.0E4, 3.0E4, 4.0E4, 5.0E4, 6.0E4, 7.0E4, 8.0E4, 9.0E4, 1.0E5},
ymin=0,
ymax=100000,
% symbolic x coords={LeftMargin, 16k, Space, 32k, Space, 65k, RightMargin},
ymajorgrids,
axis lines=none,
xtick={0,1,2},
xticklabels={16K,32K,65K},
legend style={draw=none, at={(0.785,0.88)}, anchor=north, font=\footnotesize},
enlarge x limits=0.3,
% x=0.5\textwidth,
% scale only axis,
% xmin=LeftMargin,
% xmax=RightMargin,
cycle list={ {fill=RoyalBlue!90!white}, {fill=RoyalBlue!90!white} },
area legend,
ylabel style={font=\Large},
xlabel style={font=\Large},
title style={font=\Large},
tick label style={font=\Large},
yticklabel={\empty},
xticklabel={\empty},
scaled y ticks=false,]
\addplot+[nodes near coords, point meta=explicit symbolic, every node near coord/.append style={yshift=3pt, xshift=-0.35cm, rotate=45, anchor=west, font=\Large}, error bars/.cd, y dir=both, y explicit]coordinates {
 (0,28858) +- (57, 57) 
 (1,41355) +-(350, 350) 
 (2,41189) +-(285, 286) 
};
\addplot+[nodes near coords, point meta=explicit symbolic, every node near coord/.append style={yshift=3pt, xshift=-0.32cm, rotate=45, anchor=west, font=\Large}, error bars/.cd, y dir=both, y explicit] coordinates {
 (0,13219) +- (139, 139) 
 (1,24491) +-(112, 112) 
 (2,28008) +-(181, 181) 
};
\addplot+[nodes near coords, point meta=explicit symbolic, every node near coord/.append style={yshift=3pt, xshift=-0.25cm, rotate=45, anchor=west, font=\Large}, error bars/.cd, y dir=both, y explicit] coordinates {
 (0,10053) +- (116, 115) 
 (1,18771) +-(136, 136) 
 (2,21928) +-(91, 91) 
};
\addplot+[nodes near coords, point meta=explicit symbolic, every node near coord/.append style={yshift=3pt, xshift=-0.14cm, rotate=45, anchor=west, font=\Large}, error bars/.cd, y dir=both, y explicit] coordinates {
 (0,7741) +- (43, 43) 
 (1,11830) +-(303, 304) 
 (2,14710) +-(257, 257) 
};
\addplot+[nodes near coords, point meta=explicit symbolic, every node near coord/.append style={yshift=3pt, xshift=-0.09cm, rotate=45, anchor=west, font=\Large}, error bars/.cd, y dir=both, y explicit] coordinates {
 (0,4950) +- (36, 36) 
 (1,5971) +-(66, 66) 
 (2,7634) +-(90, 89) 
};
\addplot+[nodes near coords, point meta=explicit symbolic, every node near coord/.append style={yshift=3pt, xshift=-0.1cm, rotate=45, anchor=west, font=\Large}, error bars/.cd, y dir=both, y explicit] coordinates {
 (0,1547) +- (18, 18) 
 (1,937) +-(48, 48) 
 (2,1246) +-(41, 40) 
};
\legend{JUC}
\end{axis}
    \end{tikzpicture}
    %\vspace{-1em}%
    \caption{Performance of a hash table with various working sets (75\% of updates)}
    \labfigure{mapsizes}
    %\vspace{-1em}%
\end{figure}

\paragraph{Mixed workloads}
\reffigure{evaluation:mixed} presents the performance of two maps: a hash table and a skip list.
% As previously, the figure compares the implementations available in \sys and \code{java.util.concurrent} (\JUC).
%
Here, updates are split evenly between adds and removes.
As previously, threads updates distinct items.
Reads looks for a single item in the map.

The results in \reffigure{evaluation:mixed} indicate that for both maps, adjusted or not, the throughput tends to decline with more updates.
This is primarily due to contention as threads try to access the same memory resources on the machine.
Overall, adjusted objects perform better.
This comes from their internals which permit threads to access different memory locations, thereby reducing contention.

As pointed out earlier, \textit{cycle\_activity.stalls\_total} reports thread activity.
On average, \code{ExtendedSegmentedHashMap} exhibits 30\% less cycles with one thread waiting than \code{ConcurrentHashMap}.
With \code{ExtendedSegmentedSkipListMap}, the reduction is 11\%.

When the update ratio increases, the performance gap between \sys and \JUC widens.
For instance, with a hash table, \sys is on average \speedup{2.5} faster with 25\% updates, and up to \speedup{4.5} faster with 100\% updates.

\begin{figure*}[t]
  \centering
  \hspace{-7em}%
  \begin{minipage}[b]{0.6\textwidth}
    \centering
    \begin{tikzpicture}
    \begin{axis}[
        ybar,
        bar width=0.02\linewidth,
        ylabel={Speedup},
        ytick={0,0.5,1,1.5,2},
        ymin=0,
        ymax=2,
        xlabel={\# threads},
        xticklabels={1,5,10,20,40,80,Avg,DAP},
        xtick={0,1,2,3,4,5,6,7},
        scale only axis,
        width=0.9\linewidth,
        height = 3cm,
        ymajorgrids,
        axis x line*=bottom,
        axis y line*=left,
        legend style={draw=none, at={(0.5,1.2)}, anchor=north},
        cycle list={
            {pattern color=orange!100!white, fill=orange!100!white},%, pattern=north east lines},
            {pattern color=orange!100!white, fill=orange!80!white, pattern=crosshatch},
            {pattern color=orange!100!white, fill=orange!60!white, pattern=north east lines},
            {pattern color=orange!100!white, fill=orange!40!white},%, pattern=grid}
        },
        area legend,
        ylabel style={font=\footnotesize},
        xlabel style={font=\footnotesize},
        title style={font=\footnotesize},
        tick label style={font=\footnotesize}
        ]

            \addplot+[nodes near coords, point meta=explicit symbolic, every node near coord/.append style={rotate=45, anchor=west, yshift=6pt, font=\footnotesize}, error bars/.cd, y dir=both, y explicit] coordinates {
                (0,0.9552) +- (0.0119,0.0119) [100K]
                (1,1.2071) +- (0.1358,0.1358)
                (2,1.4539) +- (0.1114,0.1114)
                (3,1.3379) +- (0.1383,0.1383)
                (4,1.4449) +- (0.0591,0.0591)
                (5,1.6442) +- (0.0585,0.0585)
                (6,1.3405) +- (0.0306,0.0306)
                (7,0)
            };

            \addplot+[nodes near coords, point meta=explicit symbolic, every node near coord/.append style={rotate=45, anchor=west, yshift=4pt, font=\footnotesize}, error bars/.cd, y dir=both, y explicit] coordinates {
                (0,1.0402) +- (0.0454,0.0454) [500K]
                (1,1.0158) +- (0.1198,0.1198)
                (2,1.1573) +- (0.0383,0.0383)
                (3,1.3029) +- (0.0637,0.0637)
                (4,1.3331) +- (0.0346,0.0346)
                (5,1.5725) +- (0.0591,0.0591)
                (6,1.237) +- (0.0201,0.0201)
                (7,0)
            };

            \addplot+[nodes near coords, point meta=explicit symbolic, every node near coord/.append style={rotate=45, anchor=west, yshift=6pt, font=\footnotesize}, error bars/.cd, y dir=both, y explicit] coordinates {
                (0,0.96) +- (0.0056,0.0056) [1000K]
                (1,1.0601) +- (0.1209,0.1209)
                (2,1.1258) +- (0.1176,0.1176)
                (3,1.2462) +- (0.0373,0.0373)
                (4,1.2787) +- (0.0258,0.0258)
                (5,1.5207) +- (0.0399,0.0399)
                (6,1.1986) +- (0.0172,0.0172)
                (7,0)
            };
            \draw[thick, black] (axis cs:-1,1) -- (axis cs:8,1);
            \draw[dashed, thick, black] (axis cs:5.5,0) -- (axis cs:5.5,3);
    \end{axis}

    \begin{axis}[
        ybar,
        bar width=0.02\linewidth,
        ytick={0,0.25,0.5,0.75,1,1.25,1.5,1.75,2},
        ymin=0,
        ymax=2,
        xticklabels={1,5,10,20,40,80,AvgAdjusted,AvgDAP},
        xtick={0,1,2,3,4,5,6,7},
        scale only axis,
        width=0.9\linewidth,
        height=0.25\textwidth,
        axis lines=none,
        legend style={draw=none, at={(0.5,1.2)}, anchor=north},
        cycle list={
            {pattern color=orange!100!white, fill=black!100!white},%, pattern=north east lines},
            {pattern color=black!100!white, fill=black!80!white, pattern=crosshatch},
            {pattern color=black!100!white, fill=black!60!white, pattern=north east lines},
            {pattern color=black!100!white, fill=black!40!white}%, pattern=grid}
        },
        area legend,
        yticklabel={\empty},
        xticklabel={\empty}
        ]
                
        \addplot+[error bars/.cd, y dir=both, y explicit,] coordinates {
            (0,0)
            (1,0)
            (2,0)
            (3,0)
            (4,0)
            (5,0)
            (6,0)
            (7,1.5722) +- (0.0931,0.0931) [black]
        };
                
        \addplot+[error bars/.cd, y dir=both, y explicit] coordinates {
            (0,0)
            (1,0)
            (2,0)
            (3,0)
            (4,0)
            (5,0)
            (6,0)
            (7,1.3747) +- (0.0576,0.0576) [black]
        };
                
        \addplot+[error bars/.cd, y dir=both, y explicit] coordinates {
            (0,0)
            (1,0)
            (2,0)
            (3,0)
            (4,0)
            (5,0)
            (6,0)
            (7,1.3572) +- (0.0544,0.0544) [black]
        };
    \end{axis}            
\end{tikzpicture}
\caption{
    \labfigure{retwis}
    Performance of the social network application relative to the baseline (\JUC).
}
  \end{minipage}
  \hspace{5em}%
  \begin{minipage}[b]{0.2\textwidth}
    \centering
    \begin{tikzpicture}
  \hspace{-2em}%
    \begin{axis}[
        xbar,
        xlabel={Throughput (Mops/s)},
        xtick={2000, 4000, 6000, 8000, 10000},
        xticklabels={2, 4, 6, 8, 10},
        xmin=0,
        xmax=10000,
        height=0.9\textwidth,
        bar width=0.06\textwidth,
        width=\textwidth,
        ylabel={$\alpha$},
        yticklabels={0.01,0.1,1},
        ytick={0,1,2},
        scaled ticks=false,
        enlarge y limits=0.2,
        scale only axis,
        axis x line*=bottom,
        axis y line*=left,
        legend style={draw=none, at={(1.1,0.45)}, anchor=north, font=\footnotesize},
        cycle list={
            {fill=RoyalBlue!90!white},
            {fill=black},
            {fill=orange,}
        },
        area legend,
        ylabel style={font=\footnotesize},
        xlabel style={font=\footnotesize},
        tick label style={font=\footnotesize}]
            \addplot+[error bars/.cd, x dir=both, x explicit] coordinates {
                (6428.3667,0) +- (69.89366300798899,69.89366300798899)
                (6527.4667,1) +- (57.73432605873163,57.73432605873163)
                (7437.4,2) +- (83.62435743938141,83.62435743938141)
            };
                    
            \addplot+[error bars/.cd, x dir=both, x explicit] coordinates {
                (7400.8333,0) +- (45.160843219613525,45.160843219613525)
                (7500.8667,1) +- (50.92175033725727,50.92175033725727)
                (9371.1333,2) +- (68.46899798110381,68.46899798110381)
            };
                    
            \addplot+[error bars/.cd, x dir=both, x explicit] coordinates {
                (7771.7,0) +- (64.23452352517101,64.23452352517101)
                (7882.4333,1) +- (70.08112143340733,70.08112143340733)
                (9713.6333,2) +- (102.19860654484881,102.19860654484881)
            };
                    
            \legend{JUC,DAP,\sys}
    \end{axis}
\end{tikzpicture}
\caption{
    \labfigure{alpha}
    Varying the user access distribution.
}
  \end{minipage}
\end{figure*}

\paragraph{Impact of the working set}
\reffigure{mapsizes} evaluates the hash tables when the working set changes.
On the left of the figure, the working set is the same as in prior experiments.
A map contains initially 16K data items, and there are 32K items in total.
We double these numbers twice to see how performance changes (right of \reffigure{mapsizes}).

Increasing the working set results in a growth of the binned array backing the hash table.
Each time the size doubles, the values in the bins are re-balanced, to ensure fast access to the objects.
As a consequence, contention on a given bin reduces.
Contention also decreases due to a larger working set:
the likeliness of two threads to have concurrently the same hash is smaller.
All of this explains why the performance gap between the two implementations narrows with larger working sets.

\begin{keytakeaway}[micro-benchmarks]
  \sys clearly outperforms JUC in high-contention scenarios and with mixed workloads.
  This is due to reduced cycle stalls and more parallel memory accesses.
  As the working set grows, contention decreases, narrowing the performance gap.
\end{keytakeaway}  

\subsection{A social network application}
\labsection{evaluation:retwis}

This section evaluates the interest of \sys for a modern application.
More specifically, we consider a Retwis-like benchmark, a simplified clone of Twitter.
Our implementation is multithreaded and in Java.
It covers around \sloc{1,500}.

\paragraph{Overview}
Our code extends Retwis' logic \cite{retwis}.
The social network application maintains a set of users.
Each user can write a message, follow/unfollow other users, and display her timeline (that is, the messages published by people she follows).
Additionally, a user can also join/leave a group of interest, and update her profile.
The application uses five data structures:
\begin{inparaenumorig}[]%
\item \code{mapFollowers} associates to each user its followers;
\item \code{mapFollowing} stores the users each user follows; and
\item \code{mapTimelines} maintains the timeline of each user.
\item \code{mapProfiles} maintains the profile of each user.
\item \code{community} stores the users who joined the group of interest.
\end{inparaenumorig}

When a user is added to the system, the above data structures are updated appropriately.
%% A set to store the followers and the users that the user follows.
%% These sets are updated during follow/unfollow operations.
For the timeline, a queue is used.
When a user tweets a message, the message is added to the timelines of her followers.
To limit performance impact, the message is sent only to the first followers.
Insertion in the other timelines is done asynchronously (not implemented).
When reading the timeline, all the messages in the queue are fetched, and the last 50 messages are returned.
%
% All the above data structures are accessed concurrently.
Each thread is assigned a range of users with consistent hashing.
The number of threads in the application is tunable.

We compare three versions of the application.
The first version is based on \JUC.
The second one is disjoint-access parallel (DAP).
This means that threads always access distinct objects.
DAP provides an upper bound on the performance of a parallel implementation.
The third version uses \sys.
Objects are adjusted as follow:
\code{mapFollowers}, \code{mapFollowing} as well as \code{mapTimelines} are all \CWMR.
The queue used in the timeline is multi-producer single-consumer.
The set implementing the interest group is \CWMR.
% The sets that store the the followers and the people following a given user are respectively \CWMR and \CWCR.

\paragraph{Dataset}
To build the social network graph, we follow the method in \cite{SocialNetworkGraph}.
This results in a graph whose in-degree and out-degree distributions abide by a power law;
as found in real-world social networks, e.g., Twitter \cite{GraphTwitter}.
In \cite{SocialNetworkGraph}, the authors include an extra step to increase the average coefficient cluster of the graph.
This steps is omitted in our benchmark because it is too much time consuming at the scales we consider (for $10^6$ users, it would run for hundreds of days).

\paragraph{The benchmark}
The workload is modeled after actual usage in real-world social networks.
\reftab{WorkloadRetwis} details the mix of operations.
The benchmark runs 10 times during \second{20} after a warm-up period of \second{5}.
The benchmark selects in advance users that do operations and the users they follow/unfollow.
These two distributions are modeled after a power law to capture that some users are more active (or popular) than others.
The distribution is tuned with a parameter $\alpha$.
When $\alpha$ equals $1$, it is biased and when it is close to $0$ the distribution is uniform.

As the benchmark progresses, the social network changes.
To maintain its invariants, when a user performs follow/unfollow, it immediately applies the converse operation.
This second operation is not measured in the benchmark.

\vspace{1em}%
\begin{table}[!h]
  \centering
  \small
  \begin{tabular}{|P{3.5cm}|P{2cm}|}
    \hline Add a user             & $5 \%$     \\
    \hline Follow/unfollow a user   & $5 \%$    \\
    \hline Post a tweet           & $15 \%$    \\
    \hline Display the timeline        & $60 \%$    \\
    \hline Join/leave the interest group        & $5 \%$    \\
    \hline Update the profile        & $10 \%$    \\
    \hline
  \end{tabular}
  \caption{
    \labtab{WorkloadRetwis}
    Workload for the social network application.
  }
  \vspace{-2em}
\end{table}

\paragraph{Results}
\reffigure{retwis} shows the performance of the social network application.
In this experiment, the parameter $\alpha$ is set to $1$.
We vary the number of users and the number of threads.
Performance is reported relative to \JUC.
In this figure, \sys is consistently faster than the baseline, except when a single thread runs the benchmark.
More precisely, \sys is between \speedup{0.89} and \speedup{1.7} of the baseline.
The largest speed-up is obtained with $10^5$ users, when using $80$ threads.
This is close to the performance obtained with the DAP implementation (right of \reffigure{retwis}).

Many factors influence performance, ranging from the CPU cache to the size of the working set.
For \sys, write amplification due to additional metadata can lower benefits.
In an early version of the implementation, the sets that store the followers and the people following a given user were both adjusted (because these are \CWSR).
However, the impact on memory was hindering any benefit from lower contention.

\reffigure{alpha} illustrates how the distribution law to pick users impacts performance.
When the law is biased, high locality favors \sys, as contention is the dominating performance factor.
Conversely, when users are picked uniformly at random, the CPU cache is less efficient.
Hence the performance gap with \JUC is smaller.
Contrarily to DAP, \sys and JUC share data among threads.
In this experiment, \sys has a lower cache miss ratio than DAP's---around 3 percentage points.
This explains why it is slightly better.

\begin{keytakeaway}[Retwis Benchmark with \sys]  
  \sys outperforms JUC, achieving up to \speedup{1.7} speed-up with 80 threads.
  Its advantage comes from lower contention, although write amplification may reduce gains.
  Performance also depends on the workload distribution and its locality.
\end{keytakeaway}  

% This shows the trade-off that exists between better contention and lower memory footprint when designing shared objects.

%% \subsection{Apache Ignite}
%% \labsection{evaluation:ignite}

%% We then analyse an open source database to see whether adjusted objects are used and, if so, how they are used and what benefits they provide.
%% We therefore used YCSB, a well-known tool that provides a series of benchmarks for evaluating the performance of different "key-value" and "cloud" serving stores. 
%% We choose to evaluate the open-source project Apache Ignite, a distributed database for high‑performance applications with in‑memory speed.

%% We first analyse the Apache Ignite code to see if there were any shared objects similar to our adjusted objects, and we find that a FastSizeDeqeue object is used.
%% This object keeps track of the size of the queue using a counter, which avoids having to read the whole queue to find out its size, thus improving performance.
%% We execute all YCSB core workloads using the following settings: 100 million operations are carried out, there are 1 million records, 190 clients are involved and there is no data persistence.
%% The results depicted in Figure \ref{fig:Ignite_YCSB} illustrate that, across all workloads, the utilization of FastSizeDeque consistently yields superior performance.

%% \begin{figure}
%%     \centering
%%     \includegraphics[width=1.1\linewidth]{Ignite_YCSB.png}
%%     \caption{Performance Comparison of Different YCSB Workloads on Apache Ignite: FastSizeDeque vs. ConcurrentLinkedDeque}
%%     \label{fig:Ignite_YCSB}
%% \end{figure}

\section{Related Work}
\labsection{related}

% commutativity
Programming multicore hardware is notoriously difficult \cite{broken}.
To simplify its use, some recent works propose to leverage commutativity.
With Veracity~\cite{veracity}, programmers write sequential code then annote the program to express conditions under which code fragments commute.
% These conditions are automatically verified and even inferred by a compiler.
%
In \cite{rule}, the authors introduce the SIM-commutativity rule.
% FIXME say a word about the sufficient part of the rule, which require controlling the scheduler
% The rule is based on the notion of SIM-commutativity.
A sequence $y$ SIM-commutes in a sequential history $h$ if for any sub-sequence $x$ of $y$, the permutations of $x$ are all equivalent.
\citet{rule} show that if $y$ SIM-commutes in $h$ then there exists an implementation of $y$ such that the steps of the threads when executing $y$ are conflict free.
Translated in our framework, the rule defines a sufficient condition for $y$ to be strongly labeling in $G(y,s)$, where $s$ is the state before entering $y$.
The commuter tool in \cite{rule} checks a specification for non-commuting operations.
This corresponds to verifying the sufficiency part in \refprop{longlived}.

% jdk
Some adjusted objects in \sys rely internally on a segmentation.
In the JDK, constructs such as the ones used in \code{LongAdder} and \code{ConcurrentHashMap}\footnote{prior to Java 8} work similarly.
However, they differ from ours in the sense that each segment is writable by multiple threads.
(In a segmentation, a segment is \SWMR.)
Permitting this requires complex concurrency-control mechanisms with a non-negligible performance cost.

% consensus number
Jayanti~\cite{jayanti96} proposes a hierarchy ($h_m^r$) to classify objects based on their consensus numbers.
The hierarchy is robust when no collection of objects with consensus number $n$ can be used to solve consensus in an $(n+1)$-process distributed system.
Several works explore the consensus hierarchy, for instance to find objects at each level \cite{daianlag18}, or to test its robustness \cite{kleinbergm92}.
In particular, Ruppert \cite{ruppert00} studies two classes of objects: read-modify-write (RMW) and readable.
The RMW class contains objects whose operations change the state of the object before returning the previous state.
Readable types are objects for which there exists an operations that can read the state without changing it.
He provides a characterization of the consensus number for objects in these classes.
\reftheo{distcn} (in \refsection{dist}) bears similarities with these results.
% We believe that our characterization is simpler. % FIXME

% data struct.
Many works explore the design of efficient shared objects and data structures
We refer the reader to Herlihy and Shavit’s book \cite{art} for detailed examples.
\citet{sela22} propose an efficient construction to maintain the size of a data structure.
% All these works are complementary to our.
The work in \cite{laws} shows that objects which provide strongly non-commuting operations at their interfaces are inherently non-scalable.
% This characterization does not take into account that concurrent operations may commute.

% weak consistency
Weak consistency is a possible approach to improve parallel programs.
\citet{QuasiLinearizability} permit concurrent histories that are not legal (linearizable) but that are at a bounded distance from a legal history.
Another relaxed model is IVL which allows objects to return values that are close to correct ones \cite{IVL}.
Some works change the level of consistency for operations upon request.
For instance, this is the case with lazy replication \cite{lazy} and RedBlue \cite{RedBlue}.
Anna~\cite{anna} is a key-value store built atop CRDTs.
Weak consistency has the potential to scale a program.
On the other hand, it complexifies coding and does not permit to maintain certain application invariants.

\section{Closing remarks}
\labsection{conclusion}

\paragraph{Limitations}
In \refsection{dist}, we provide a characterization of scalability based on the indistinguishability graph.
There are currently two issues with these results that we plan to address in a near future.
First, as in \cite{rule}, the existence of a universal construction with limited (or no) conflicts does not ensure that it is practical with current hardware.
Second, reducing conflict may translate into write amplification.
As noted in \refsection{evaluation}, this can deteriorate the CPU cache which in turn may void the benefits of lower contention.

\paragraph{Future work}
A next logical step is to generate indistinguishability graphs from the specification.
This would guide programmers to choose and design objects for their applications.
The results reported in \reffiguretwo{mining}{pie} were found with the help of scripts.
They provide information about the methods called in a program and how their return values are used.
Going further requires to find objects that bottleneck a program.
In the present paper, this question is addressed using expertise.
Automating this step demands to analyze the code at runtime with a profiler \cite{BouksiaaTLVDGBT19}.

\paragraph{Conclusion}
Adjusting an object combines subtyping and access restriction to its interface.
As captured by the indistinguishability graph, this reduces conflicts among threads and makes the object potentially more scalable.
Adjusted objects are already in the wild.
This paper introduces their formal foundations.
It also presents a library of adjusted objects for Java called \sys.
\sys provides drop-in replacements for the standard concurrent objects from the JDK.
In our evaluation, these adjusted objects can be up to two order of magnitude faster that the common ones.

\begin{acks}
  The authors thank the Middleware '25 reviewers for their valuable feedback.
  This work is supported by France 2030 Initiative under grant ANR-23-PECL-0004 (PEPR Cloud).
\end{acks}

\bibliographystyle{plainnat}
\bibliography{biblio.bib}

\begin{thebibliography}{42}
\providecommand{\natexlab}[1]{#1}
\providecommand{\url}[1]{\texttt{#1}}
\expandafter\ifx\csname urlstyle\endcsname\relax
  \providecommand{\doi}[1]{doi: #1}\else
  \providecommand{\doi}{doi: \begingroup \urlstyle{rm}\Url}\fi

\bibitem[A.~Lynch and Tuttle(1988)]{Automata}
Nancy A.~Lynch and Mark Tuttle.
\newblock An introduction to input/output automata. revision.
\newblock \emph{CWI Quarterly}, 2:\penalty0 27, 11 1988.

\bibitem[Afek et~al.(2010)Afek, Korland, and Yanovsky]{QuasiLinearizability}
Yehuda Afek, Guy Korland, and Eitan Yanovsky.
\newblock Quasi-linearizability: Relaxed consistency for improved concurrency.
\newblock In Chenyang Lu, Toshimitsu Masuzawa, and Mohamed Mosbah, editors,
  \emph{Principles of Distributed Systems}, OPODIS '10, pages 395--410, Berlin,
  Heidelberg, 2010. Springer Berlin Heidelberg.
\newblock ISBN 978-3-642-17653-1.

\bibitem[Andrey~Kuznetsov(2018)]{FastSizeDeque}
Andrey~Gura Andrey~Kuznetsov.
\newblock Efficient deque with constant-time sizing, 2018.
\newblock URL
  \url{https://github.com/apache/ignite/blob/master/modules/core/src/main/java/org/apache/ignite/util/deque/FastSizeDeque.java}.

\bibitem[Aspnes et~al.(2012)Aspnes, Attiya, and Censor-Hillel]{aspnes12}
James Aspnes, Hagit Attiya, and Keren Censor-Hillel.
\newblock Polylogarithmic concurrent data structures from monotone circuits.
\newblock \emph{J. ACM}, 59\penalty0 (1), mar 2012.
\newblock ISSN 0004-5411.
\newblock \doi{10.1145/2108242.2108244}.
\newblock URL \url{https://doi.org/10.1145/2108242.2108244}.

\bibitem[Attiya et~al.(2009)Attiya, Hillel, and Milani]{invisible}
Hagit Attiya, Eshcar Hillel, and Alessia Milani.
\newblock Inherent limitations on disjoint-access parallel implementations of
  transactional memory.
\newblock In \emph{Proceedings of the Twenty-First Annual Symposium on
  Parallelism in Algorithms and Architectures}, SPAA '09, page 69–78, New
  York, NY, USA, 2009. Association for Computing Machinery.
\newblock ISBN 9781605586069.
\newblock \doi{10.1145/1583991.1584015}.
\newblock URL \url{https://doi.org/10.1145/1583991.1584015}.

\bibitem[Attiya et~al.(2011)Attiya, Guerraoui, Hendler, Kuznetsov, Michael, and
  Vechev]{laws}
Hagit Attiya, Rachid Guerraoui, Danny Hendler, Petr Kuznetsov, Maged~M.
  Michael, and Martin Vechev.
\newblock Laws of order: expensive synchronization in concurrent algorithms
  cannot be eliminated.
\newblock In \emph{Proceedings of the 38th Annual ACM SIGPLAN-SIGACT Symposium
  on Principles of Programming Languages}, POPL '11, page 487–498, New York,
  NY, USA, 2011. Association for Computing Machinery.
\newblock ISBN 9781450304900.
\newblock \doi{10.1145/1926385.1926442}.
\newblock URL \url{https://doi.org/10.1145/1926385.1926442}.

\bibitem[Borowsky et~al.(1994)Borowsky, Gafni, and Afek]{sense}
Elizabeth Borowsky, Eli Gafni, and Yehuda Afek.
\newblock Consensus power makes (some) sense! (extended abstract).
\newblock In \emph{Proceedings of the Thirteenth Annual ACM Symposium on
  Principles of Distributed Computing}, PODC '94, page 363–372, New York, NY,
  USA, 1994. Association for Computing Machinery.
\newblock ISBN 0897916549.
\newblock \doi{10.1145/197917.198126}.
\newblock URL \url{https://doi.org/10.1145/197917.198126}.

\bibitem[Bouksiaa et~al.(2019)Bouksiaa, Trahay, Lescouet, Voron, Dulong,
  Guermouche, Brunet, and Thomas]{BouksiaaTLVDGBT19}
Mohamed Said~Mosli Bouksiaa, Fran{\c{c}}ois Trahay, Alexis Lescouet, Gauthier
  Voron, R{\'{e}}mi Dulong, Amina Guermouche, Elisabeth Brunet, and Ga{\"{e}}l
  Thomas.
\newblock Using differential execution analysis to identify thread
  interference.
\newblock \emph{{IEEE} Trans. Parallel Distributed Syst.}, 30\penalty0
  (12):\penalty0 2866--2878, 2019.
\newblock \doi{10.1109/TPDS.2019.2927481}.
\newblock URL \url{https://doi.org/10.1109/TPDS.2019.2927481}.

\bibitem[Chen et~al.(2022)Chen, Fathololumi, Koskinen, and Pincus]{veracity}
Adam Chen, Parisa Fathololumi, Eric Koskinen, and Jared Pincus.
\newblock Veracity: declarative multicore programming with commutativity.
\newblock \emph{Proc. ACM Program. Lang.}, 6\penalty0 (OOPSLA2), October 2022.
\newblock \doi{10.1145/3563349}.
\newblock URL \url{https://doi.org/10.1145/3563349}.

\bibitem[Clements et~al.(2015)Clements, Kaashoek, Zeldovich, Morris, and
  Kohler]{rule}
Austin~T. Clements, M.~Frans Kaashoek, Nickolai Zeldovich, Robert~T. Morris,
  and Eddie Kohler.
\newblock The scalable commutativity rule: Designing scalable software for
  multicore processors.
\newblock \emph{ACM Trans. Comput. Syst.}, 32\penalty0 (4), January 2015.
\newblock ISSN 0734-2071.
\newblock \doi{10.1145/2699681}.
\newblock URL \url{https://doi.org/10.1145/2699681}.

\bibitem[Daian et~al.(2018)Daian, Losa, Afek, and Gafni]{daianlag18}
Eli Daian, Giuliano Losa, Yehuda Afek, and Eli Gafni.
\newblock A wealth of sub-consensus deterministic objects.
\newblock In Ulrich Schmid and Josef Widder, editors, \emph{32nd International
  Symposium on Distributed Computing, {DISC} 2018, New Orleans, LA, USA,
  October 15-19, 2018}, volume 121 of \emph{LIPIcs}, pages 17:1--17:17. Schloss
  Dagstuhl - Leibniz-Zentrum f{\"{u}}r Informatik, 2018.
\newblock \doi{10.4230/LIPIcs.DISC.2018.17}.
\newblock URL \url{https://doi.org/10.4230/LIPIcs.DISC.2018.17}.

\bibitem[Fatourou et~al.(2003)Fatourou, Fich, and Ruppert]{SnapshotComplexity}
Panagiota Fatourou, Faith Fich, and Eric Ruppert.
\newblock A tight time lower bound for space-optimal implementations of
  multi-writer snapshots.
\newblock In \emph{Proceedings of the Thirty-Fifth Annual ACM Symposium on
  Theory of Computing}, STOC '03, page 259–268, New York, NY, USA, 2003.
  Association for Computing Machinery.
\newblock ISBN 1581136749.
\newblock \doi{10.1145/780542.780582}.
\newblock URL \url{https://doi.org/10.1145/780542.780582}.

\bibitem[Fischer et~al.(1985)Fischer, Lynch, and Paterson]{FLP}
Michael~J. Fischer, Nancy~A. Lynch, and Michael~S. Paterson.
\newblock Impossibility of distributed consensus with one faulty process.
\newblock \emph{J. ACM}, 32\penalty0 (2):\penalty0 374--382, April 1985.
\newblock ISSN 0004-5411.
\newblock \doi{10.1145/3149.214121}.
\newblock URL \url{http://doi.acm.org/10.1145/3149.214121}.

\bibitem[Gramoli(2015)]{synchrobench}
Vincent Gramoli.
\newblock More than you ever wanted to know about synchronization:
  synchrobench, measuring the impact of the synchronization on concurrent
  algorithms.
\newblock In \emph{Proceedings of the 20th ACM SIGPLAN Symposium on Principles
  and Practice of Parallel Programming}, PPoPP '15, page 1–10, New York, NY,
  USA, 2015. Association for Computing Machinery.
\newblock ISBN 9781450332057.
\newblock \doi{10.1145/2688500.2688501}.
\newblock URL \url{https://doi.org/10.1145/2688500.2688501}.

\bibitem[Herlihy(1991)]{Herlihy}
Maurice Herlihy.
\newblock Wait-free synchronization.
\newblock \emph{ACM Trans. Program. Lang. Syst.}, 13\penalty0 (1):\penalty0
  124--149, January 1991.
\newblock ISSN 0164-0925.
\newblock \doi{10.1145/114005.102808}.
\newblock URL \url{http://doi.acm.org/10.1145/114005.102808}.

\bibitem[Herlihy(2016)]{Herlihy16}
Maurice Herlihy.
\newblock Asynchronous consensus impossibility.
\newblock In \emph{Encyclopedia of Algorithms}, pages 152--155. Springer, 2016.
\newblock \doi{10.1007/978-1-4939-2864-4\_36}.
\newblock URL \url{https://doi.org/10.1007/978-1-4939-2864-4\_36}.

\bibitem[Herlihy and Shavit(2008)]{art}
Maurice Herlihy and Nir Shavit.
\newblock \emph{The art of multiprocessor programming}.
\newblock Morgan Kaufmann, 2008.
\newblock ISBN 978-0-12-370591-4.

\bibitem[Hoare(1969)]{hoare}
C.~A.~R. Hoare.
\newblock An axiomatic basis for computer programming.
\newblock \emph{Commun. ACM}, 12\penalty0 (10):\penalty0 576–580, oct 1969.
\newblock ISSN 0001-0782.
\newblock \doi{10.1145/363235.363259}.
\newblock URL \url{https://doi.org/10.1145/363235.363259}.

\bibitem[Israeli and Rappoport(1994)]{dap}
Amos Israeli and Lihu Rappoport.
\newblock Disjoint-access-parallel implementations of strong shared memory
  primitives.
\newblock In \emph{Proceedings of the Thirteenth Annual ACM Symposium on
  Principles of Distributed Computing}, PODC '94, page 151–160, New York, NY,
  USA, 1994. Association for Computing Machinery.
\newblock ISBN 0897916549.
\newblock \doi{10.1145/197917.198079}.
\newblock URL \url{https://doi.org/10.1145/197917.198079}.

\bibitem[Jayanti et~al.(1996)Jayanti, Tan, and Toueg]{jayanti96}
Prasad Jayanti, King Tan, and Sam Toueg.
\newblock Time and space lower bounds for non-blocking implementations
  (preliminary version).
\newblock In \emph{Proceedings of the Fifteenth Annual ACM Symposium on
  Principles of Distributed Computing}, PODC '96, page 257–266, New York, NY,
  USA, 1996. Association for Computing Machinery.
\newblock ISBN 0897918002.
\newblock \doi{10.1145/248052.248105}.
\newblock URL \url{https://doi.org/10.1145/248052.248105}.

\bibitem[Kane and Sutra(2025)]{dego}
Boubacar Kane and Pierre Sutra.
\newblock Dego.
\newblock \url{https://github.com/BoubacarKaneTSP/DEGO}, 2025.

\bibitem[Khanchandani et~al.(2021)Khanchandani, Schäppi, Wang, and
  Wattenhofer]{cn1}
Pankaj Khanchandani, Jan Schäppi, Ye~Wang, and Roger Wattenhofer.
\newblock On consensus number 1 objects.
\newblock In \emph{2021 IEEE 27th International Conference on Parallel and
  Distributed Systems, ICPADS '21}, pages 875--882, 2021.
\newblock \doi{10.1109/ICPADS53394.2021.00115}.

\bibitem[Kleinberg and Mullainathan(1993)]{kleinbergm92}
Jon~M. Kleinberg and Sendhil Mullainathan.
\newblock Resource bounds and combinations of consensus objects.
\newblock In Jim Anderson and Sam Toueg, editors, \emph{Proceedings of the
  Twelth Annual {ACM} Symposium on Principles of Distributed Computing, Ithaca,
  New York, USA, August 15-18, 1993}, pages 133--143. {ACM}, 1993.
\newblock \doi{10.1145/164051.164069}.
\newblock URL \url{https://doi.org/10.1145/164051.164069}.

\bibitem[Koskinen et~al.(2010)Koskinen, Parkinson, and Herlihy]{koskinen10}
Eric Koskinen, Matthew Parkinson, and Maurice Herlihy.
\newblock Coarse-grained transactions.
\newblock In \emph{Proceedings of the 37th Annual ACM SIGPLAN-SIGACT Symposium
  on Principles of Programming Languages}, POPL '10, page 19–30, New York,
  NY, USA, 2010. Association for Computing Machinery.
\newblock ISBN 9781605584799.
\newblock \doi{10.1145/1706299.1706304}.
\newblock URL \url{https://doi.org/10.1145/1706299.1706304}.

\bibitem[Kuznetsov(2005)]{petr-phd}
Petr Kuznetsov.
\newblock \emph{Synchronization using failure detectors}.
\newblock PhD thesis, EPFL, 01 2005.

\bibitem[Ladin et~al.(1990)Ladin, Liskov, and Shrira]{lazy}
Rivka Ladin, Barbara Liskov, and Liuba Shrira.
\newblock Lazy replication: exploiting the semantics of distributed services.
\newblock In \emph{Proceedings of the 4th Workshop on ACM SIGOPS European
  Workshop}, EW 4, page 1–6, New York, NY, USA, 1990. Association for
  Computing Machinery.
\newblock ISBN 9781450373371.
\newblock \doi{10.1145/504136.504138}.
\newblock URL \url{https://doi.org/10.1145/504136.504138}.

\bibitem[Lea(2014)]{LongAdder}
Doug Lea.
\newblock Longadder: Scalable atomic counter, 2014.
\newblock URL
  \url{https://github.com/openjdk/jdk/blob/master/src/java.base/share/classes/java/util/concurrent/atomic/LongAdder.java}.

\bibitem[Li et~al.(2012)Li, Porto, Clement, Gehrke, Pregui{\c c}a, and
  Rodrigues]{RedBlue}
Cheng Li, Daniel Porto, Allen Clement, Johannes Gehrke, Nuno Pregui{\c c}a, and
  Rodrigo Rodrigues.
\newblock Making geo-replicated systems fast as possible, consistent when
  necessary.
\newblock In \emph{Presented as part of the 10th {USENIX} Symposium on
  Operating Systems Design and Implementation ({OSDI} 12)}, pages 265--278,
  Hollywood, CA, 2012. {USENIX}.
\newblock ISBN 978-1-931971-96-6.
\newblock URL
  \url{https://www.usenix.org/conference/osdi12/technical-sessions/presentation/li}.

\bibitem[LinkedIn(2018)]{concurrentli}
LinkedIn.
\newblock Concurrentli, 2018.
\newblock URL \url{https://github.com/linkedin/concurrentli}.

\bibitem[Lipton(1975)]{lipton75}
Richard~J. Lipton.
\newblock Reduction: a method of proving properties of parallel programs.
\newblock \emph{Commun. ACM}, 18\penalty0 (12):\penalty0 717–721, dec 1975.
\newblock ISSN 0001-0782.
\newblock \doi{10.1145/361227.361234}.
\newblock URL \url{https://doi.org/10.1145/361227.361234}.

\bibitem[Liskov and Wing(1994)]{subtyping}
Barbara~H. Liskov and Jeannette~M. Wing.
\newblock A behavioral notion of subtyping.
\newblock \emph{ACM Trans. Program. Lang. Syst.}, 16\penalty0 (6):\penalty0
  1811–1841, November 1994.
\newblock ISSN 0164-0925.
\newblock \doi{10.1145/197320.197383}.
\newblock URL \url{https://doi.org/10.1145/197320.197383}.

\bibitem[Myers et~al.(2014)Myers, Sharma, Gupta, and Lin]{GraphTwitter}
Seth~A. Myers, Aneesh Sharma, Pankaj Gupta, and Jimmy Lin.
\newblock Information network or social network? the structure of the twitter
  follow graph.
\newblock In \emph{Proceedings of the 23rd International Conference on World
  Wide Web}, WWW '14 Companion, page 493–498, New York, NY, USA, 2014.
  Association for Computing Machinery.
\newblock ISBN 9781450327459.
\newblock \doi{10.1145/2567948.2576939}.
\newblock URL \url{https://doi.org/10.1145/2567948.2576939}.

\bibitem[Raynal(2018)]{Raynal18}
Michel Raynal.
\newblock \emph{Fault-Tolerant Message-Passing Distributed Systems - An
  Algorithmic Approach}.
\newblock Springer, 2018.
\newblock ISBN 978-3-319-94140-0.
\newblock \doi{10.1007/978-3-319-94141-7}.
\newblock URL \url{https://doi.org/10.1007/978-3-319-94141-7}.

\bibitem[Rinberg and Keidar(2023)]{IVL}
Arik Rinberg and Idit Keidar.
\newblock Intermediate value linearizability: A quantitative correctness
  criterion.
\newblock \emph{J. ACM}, 70\penalty0 (2), April 2023.
\newblock ISSN 0004-5411.
\newblock \doi{10.1145/3584699}.
\newblock URL \url{https://doi.org/10.1145/3584699}.

\bibitem[Ruppert(2000)]{ruppert00}
Eric Ruppert.
\newblock Determining consensus numbers.
\newblock \emph{{SIAM} J. Comput.}, 30\penalty0 (4):\penalty0 1156--1168, 2000.
\newblock \doi{10.1137/S0097539797329439}.
\newblock URL \url{https://doi.org/10.1137/S0097539797329439}.

\bibitem[Sanfilippo(2009)]{retwis}
Salvatore Sanfilippo.
\newblock Retwis, 2009.
\newblock URL \url{https://github.com/antirez/retwis}.

\bibitem[Schweimer et~al.(2022)Schweimer, Gfrerer, Lugstein, Pape, Velimsky,
  Els\"{a}sser, and Geiger]{SocialNetworkGraph}
Christoph Schweimer, Christine Gfrerer, Florian Lugstein, David Pape, Jan~A.
  Velimsky, Robert Els\"{a}sser, and Bernhard~C. Geiger.
\newblock Generating simple directed social network graphs for information
  spreading.
\newblock In \emph{Proceedings of the ACM Web Conference 2022}, WWW '22, page
  1475–1485, New York, NY, USA, 2022. Association for Computing Machinery.
\newblock ISBN 9781450390965.
\newblock \doi{10.1145/3485447.3512194}.
\newblock URL \url{https://doi.org/10.1145/3485447.3512194}.

\bibitem[Sela and Petrank(2022)]{sela22}
Gal Sela and Erez Petrank.
\newblock Concurrent size.
\newblock \emph{Proc. ACM Program. Lang.}, 6\penalty0 (OOPSLA2), October 2022.
\newblock \doi{10.1145/3563300}.
\newblock URL \url{https://doi.org/10.1145/3563300}.

\bibitem[Terry et~al.(1995)Terry, Theimer, Petersen, Demers, Spreitzer, and
  Hauser]{bayou}
D.~B. Terry, M.~M. Theimer, Karin Petersen, A.~J. Demers, M.~J. Spreitzer, and
  C.~H. Hauser.
\newblock Managing update conflicts in bayou, a weakly connected replicated
  storage system.
\newblock \emph{SIGOPS Oper. Syst. Rev.}, 29\penalty0 (5):\penalty0 172–182,
  dec 1995.
\newblock ISSN 0163-5980.
\newblock \doi{10.1145/224057.224070}.
\newblock URL \url{https://doi.org/10.1145/224057.224070}.

\bibitem[Vishkin(2014)]{broken}
Uzi Vishkin.
\newblock Is multicore hardware for general-purpose parallel processing broken?
\newblock \emph{Commun. ACM}, 57\penalty0 (4):\penalty0 35–39, April 2014.
\newblock ISSN 0001-0782.
\newblock \doi{10.1145/2580945}.
\newblock URL \url{https://doi.org/10.1145/2580945}.

\bibitem[Wakart(2013)]{MpscLinkedQueue}
Nitsan Wakart.
\newblock Efficient multi-producer, single-consumer queue, 2013.
\newblock URL
  \url{https://github.com/JCTools/JCTools/blob/master/jctools-core/src/main/java/org/jctools/queues/MpscLinkedQueue.java}.

\bibitem[Wu et~al.(2018)Wu, Faleiro, Lin, and Hellerstein]{anna}
Chenggang Wu, Jose Faleiro, Yihan Lin, and Joseph Hellerstein.
\newblock Anna: A kvs for any scale.
\newblock In \emph{2018 IEEE 34th International Conference on Data Engineering,
  ICDE '18}, pages 401--412, 2018.
\newblock \doi{10.1109/ICDE.2018.00044}.

\end{thebibliography}

\iflong
\clearpage
\appendix
\onecolumn
\section{System Model}
\labappendix{model}

Our system model follows the standard definitions in \cite{Herlihy}.
The entities that perform operations are called \textit{threads}.
There are $n \geq 2$ of them.
They communicate through shared data, or \textit{objects}.
Threads and objects are simplified deterministic I/O automata \cite{Automata}.
Below, we recall the elements of this system model and introduce some key notions to state our results.

\paragraph{Object}
An object is defined by a \textit{data type}.
The data type models the possible states of the object, the operations to access it, as well as the response values from these operations.
Formally, a (sequential) data type is an automaton $A=(\stateSet, \stateInit, \cmdSet, \valSet, \tau)$ where $\stateSet$ is the set of states of $A$, $\stateInit \in \stateSet$ its initial state, $\cmdSet$ the operations of $A$, $\valSet$ the response values, and $\tau : \stateSet \times \cmdSet \rightarrow \stateSet \times \valSet$ defines the transition relation.
An operation $c$ is \emph{total} if $\stateSet \times \{c\}$ is in the domain of $\tau$.
Operation $c$ is \emph{deterministic} if the restriction of $\tau$ to $\stateSet \times \{c\}$ is a function.
Hereafter, we assume that all operations are total and deterministic.
The selections $.\st$ and $.\val$ extract respectively the state and the response value components of an operation, that is given a state $s$ and an operation $c$, $\tau(s,c)=(\tau(s,c).\st,\tau(s,c).\val)$.
Function $\tau^{+}$ is defined by repeating the application of $\tau$, that is for some sequence of operations $\sigma=c_1{\ldots}c_{k \geq 1}$ and a state $s$ we have:
\begin{displaymath}
  \tau^{+}(s,\sigma)
  \equaldef
  \left\lbrace
  \begin{array}{ll}
    \tau(s,c_1) & \text{if $k=1$,} \\
    \tau^{+}(\tau(s,c_1).\st,c_2{\ldots}c_k) & \text{otherwise.}
  \end{array}
  \right.
\end{displaymath}
For some sequence of operations $\sigma = c_1 \ldots c_k$, we write $s.\sigma$ the state reached after applying the operations of $\sigma$ in order.
The response of $c_i \in \sigma$ from $s$ is defined as $\tau_v(s.\sigma', c_i)$, where $\sigma'$ is the prefix of $\sigma$ before $c_i$.

\paragraph{Hoare logic}
We use Hoare's notations to specify objects.
An operation $c$ is defined as a tuple $\hoare{P}{c}{Q}$, where $P$ and $Q$ are respectively the pre- and postconditions of $c$, that is predicates over the state of the object.
When $c$ is executed in some state that satisfies $P$ and at the end of the execution of $c$, $Q$ is satisfied, the triplet $\hoare{P}{c}{Q}$ is true.
Otherwise, when $c$ is applied to a state which does not satisfy $P$ (or $Q$), $c$ does not alter the object and $c$ returns the special value $\bot$.
In such a case, the triplet $\hoare{P}{c}{Q}$ is false.
Below, we use such notations to specify a counter with operations increment, get, and reset.
As standard, $s$ is the state of the object and $s'$ is the new state after the operation is applied.
We note $\ret$ the operation's response value.
\begin{itemize}
\item[-] \hoare{\true}{\inc()}{s' = s+1 \land \ret = s' }
\item[-] \hoare{\true}{\get()}{\ret = s}
\item[-] \hoare{\true}{\reset()}{s' = 0} 
\end{itemize}

\paragraph{Liskov's substitution principle}
In \cite{subtyping}, Liskov and Wing define a substitution principle that permits to replace a type with another.
A data type $S$ is a \textit{subtype} of $T$ when the following conditions hold.
(1) There is an abstraction function that maps a state of the subtype to a state of the supertype which maintains any state invariant.
(2) $S$ preserves the operations of $T$.
This means that if $\hoare{P}{c_S}{Q}$ of $S$ corresponds to the operation $\hoare{P'}{c_T}{Q'}$ of $T$, then it is true that:
(2a) \emph{Contravariance of arguments}: $c_s$ and $c_T$ have the same number of arguments and denoting respectively $\alpha$ and $\beta$ these lists of arguments, for any $i$, $\alpha[i]$ subtypes $\beta[i]$;
(2b) \emph{Covariance of result}: Either both $c_S$ and $c_T$ have a result or neither has, and the type of the response value of $c_T$ subtypes the one of $c_S$;
(2c) \emph{Exception rule}: The exceptions signaled by $c_S$ are contained in the set of exceptions signaled by $c_T$;%
\footnote{Exception are simply response values in our model.}
(2d) \emph{Pre-condition rule}: $c_S$ can be called at least in any state required by $c_T$ ($P' \implies P$);
and
(2e) \emph{Post-condition rule}: $c_S$'s post-condition is stronger than $c_T$'s ($Q \implies Q'$).
Finally, (3) \emph{Constraint rule:} Any sequence of state changes observed in the subtype should be valid for the supertype.
This substitution principle ensures that if a property $\phi(O)$ is true for any object $O$ of type $T$, then $\phi(O')$ is true for any object $O'$ of type $S$.

\paragraph{Histories}
A history is a sequence of invocations and responses of operations by the threads on one or more objects.
We write $c \hb_h d$ when operation $c$ precedes $d$ in history $h$.
This order is called \emph{happens-before}.
Operations unrelated by $\hb_h$ are said to be \emph{concurrent}.
Following \citet{Herlihy}, histories have various properties according to the way invocations and responses interleave.
For the sake of completeness, we recall these properties in what follows.
A history $h$ is \emph{complete} if every invocation has a matching response.
A \emph{sequential} history $h$ is a non-interleaved sequence of invocations and matching responses, possibly terminated by a non-returning invocation.
When a history $h$ is not sequential, we say that it is \emph{concurrent}.
A history $h$ is \emph{well-formed}
if
(i) $h|p$ is sequential for every thread $p$,
(ii) for every operation $c$, $c$ is invoked at most once in $h$,
and (iii) for every response to some operation $c$, there exists an invocation of $c$ by the same thread that precedes it in $h$.
A well-formed history $h$ is \emph{legal} if for every object $O$, $h|O$ is both complete and sequential, and denoting $c_1{\ldots}c_{n \geq 1}$ the sequence of operations appearing in $h|O$, if for some operation $c_k$ a response value appears in $h|0$, it equals $\tau^{+}(\stateInit, c_1{\ldots}c_k).\val$.

\paragraph{Linearizability}
Two histories $h$ and $h'$ are said \emph{equivalent} if they contain the same set of events.
Given a history $h$, $h$ is \emph{linearizable} \cite{Herlihy} if it can be extended (by appending zero or more responses) 
to some complete history $h'$ equivalent to a legal and sequential history $l$ with $\hb_{h'} \subseteq \hb_{l}$.
In such a case, history $l$ is named a \emph{linearization} of $h$.

\paragraph{Wait-freedom}
Since we work in a shared memory system, we want one thread to be able to finish its tasks regardless of the activity of the other threads.
This property is called \textit{wait-freedom} \cite{Herlihy}.

\paragraph{Distributed task}
Recall that there are $n$ threads in the system.
A distributed task $\Delta$ is defined by a set $\mathcal{I}$ of input $n$-dimensional vectors, a set $\mathcal{O}$ of output $n$-dimensional vectors and a map $\Delta$ from $\mathcal{I}$ to $2^{\mathcal{O}}$.
A distributed algorithm $\mathcal{A}$ (i.e., one automaton per thread using shared objects) solves a task $\Delta$ \cite{Raynal18} when for every run of $\mathcal{A}$ there exists $(I,O) \in \Delta$ such that
\begin{inparaenum}
\item each thread $p$ starts with $in_p = I[p]$, and
\item each thread $p$ that computes (if any) some output $out_p$ satisfies $O[p] = out_p$.
\end{inparaenum}

\section{Proofs}
\labappendix{proofs}

\subsection{Relation with consensus}
\labappendix{proofs:consensus}

\distcn*

\begin{proof}
  The theorem is established using two successive inequations.
  First, we prove that if $T$ can solve consensus among $k \geq 2$ threads, then there must exist a state $s$ and a bag of operations $B$ such that $\cardinalOf{B}=k$ and $\Gr(B,s)$ has at least two connected components.
  This shows that the consensus number of $T$ is upper bounded by $\max \{k : \exists l \geq 2 \sep T \in \dist{k,l}\}$.
  Then, we establish that binary consensus is solvable among $n \geq 2$ threads under the assumption that for some bag of operations $B$ with $\cardinalOf{B}=n$ and some state $s$, $\Gr(B,s)$ contains at least two components.
  As a consequence, $\max \{k : \exists k \geq 2 \sep T \in \dist{k,l}\}$ is capped by the consensus number of $T$.
  
  \paragraph{($\leq$)}
  Assume an algorithm $\mathcal{A}$ solving binary consensus among $n \geq 2$ threads, using any number of objects of type $T$ and registers.
  Our reasoning relies on the standard valency argument introduced in FLP \cite{FLP}.
  In a nutshell, this reasoning goes as follows:
  We consider all the runs of $\mathcal{A}$.
  In a run, the collective states of the threads and the shared memory at some point in time is called a \emph{configuration}.
  Configurations of all the runs of $\mathcal{A}$ are connected to form a directed graph.
  There exists an edge $e$ labeled with $p$ between configurations $\mathcal{C}$ and $\mathcal{D}$ when the next operation of thread $p$ in $\mathcal{C}$ leads to configuration $\mathcal{D}$.
  If $c$ is the next operation called by a thread in configuration $\mathcal{C}$, then $\mathcal{C}.c$ is the configuration reached after applying it.
  Consider a configuration $C$ and let $V$ be the set of decision values of configurations reachable from $C$.
  If $\cardinalOf{V}=1$, $C$ is said \emph{monovalent}.
  When this happens, $C$ is $0$-valent or $1$-valent according to the corresponding decision value.
  The configuration is \emph{bivalent} when $\cardinalOf{V}=2$.
  When $C$ is bivalent and all the configurations immediately after it are monovalent, $C$ is \emph{critical}.

  By applying a classical reasoning~\cite{Herlihy16}, one can show that there must exist a critical configuration $\mathcal{C}$.
  Moreover, in configuration $\mathcal{C}$, all the threads are about to access the same object, say $O$, of type $T$.
  Let $s$ be the state of object $O$ in $\mathcal{C}$.
  Note $B$ the bag of the $n$ operations threads are about to execute.
  
  For $u \in \{0,1\}$, there exists $x_u \in \perm(B)$ such that for any non-empty $z \pref x_u$, $z$ leads to a $u$-valent configuration from $\mathcal{C}$.
  Indeed, from $S$ being bivalent, there exists two operations $c_0,c_1 \in B$ bringing $s$ to respectively a $0$-valent and a $1$-valent configuration.
  Hence, any permutation of $B$ starting with $c_0$ (resp., $c_1$) leads to a $0$-valent (resp., $1$-valent) configuration.
  We pick $x_0$ and $x_1$ among these permutations.

  Next, we observe that if $\Gr(B,s)$ has a single component then $x_0$ and $x_1$ have the same valency.
  If $\clazz{x_0} = \clazz{x_1}$ then there exists successive permutations $y_1, \ldots, y_{k\geq{2}}$ such that
  \begin{inparaenum}
  \item $y_1=x_0$ and $y_{k}=x_1$, and
  \item for each $i \in [1,k]$, $y_i \indistinguishable{c_i,s} y_{i+1}$ for some operation $c_i$.
  \end{inparaenum}
  From $y_i \indistinguishable{c_i,s} y_{i+1}$, operation $c_i$ returns the same response in $y_i$ and $y_{i+1}$.
  Moreover, there exists a common state posterior to $c_i$ in both $y_i$ and $y_{i+1}$.
  Let $z_i$ and $z_{i+1}$ be the corresponding prefixes, and name $p$ the thread about to execute $c_i$ from configuration $\mathcal{C}$.
  Both $\mathcal{D}=\mathcal{C}.z_i$ and $\mathcal{D}'=\mathcal{C}.z_{i+1}$ are indistinguishable for $p$.
  As a consequence, any solo run of $p$ starting from $\mathcal{D}$ is also applicable to $\mathcal{D}'$
  Hence, they have the same valency.
  By induction, we deduce that $x_0$ and $x_1$ have the same valency;
  contradiction.
  
  \paragraph{($\geq$)}
  The result is obtained by successive reductions.
  We rely on two variations of the binary consensus problem, namely weak consensus and team consensus.
  In weak consensus, validity is replaced with a weaker property:
  there exists a run in which $0$ is decided and a run in which $1$ is decided.
  Team consensus splits the threads in two (known a priori) teams.
  The agreement property of consensus is replaced with:
  if all threads in a team have the same input value, then no two threads decide different values.
  From weak consensus, one may obtain team consensus and then from team consensus, (binary) consensus itself.
  These two reductions are known result from literature \cite{petr-phd}.
  We show that under the hypothesis that two components exist, weak consensus is solvable.

  Assume that for some bag of operations $B$ with $\cardinalOf{B}=n$ and some state $s$, $\Gr(B,s)$ contains at least $2$ components.
  To solve weak consensus, we use a single shared object $O$ of type $T$.
  The algorithm works as follows.
  
  \begin{construction}
    Initially, object $O$ is in state $s$.
    From \cite{sense}, it is possible to make such an assumption.
    We map each thread $p$ to a unique operation $c_p \in B$, while respecting the permission map $O.m$.
    Similarly, each indistinguishability class $\clazz{z}$ in $\Gr(B,s)$ is mapped to some value $d(\clazz{z}) \in \{0,1\}$.
    The map $d$ is surjective, which is possible because by assumption there are at least two classes.
    Upon proposing to consensus, thread $p$ calls $c_p$, returning some result $r$ from that call.
    Then it reads object $O$, retrieving a state $s'$.
    There must exist $x \in \perm(B)$ such that $c_p$ returns $r$ in $\tau(s,x)$ and state $s'$ follows $c_p$ in $\tau(s,x)$.
    Thread $p$ decides $d(\clazz{x})$.
  \end{construction}

  We now establish that the above construction is correct.
  Consider some run and let $l$ be a linearization of the operations over $O$ in this run.
  Pick some permutation $z$ of the operations in $B$ such that $l|B$ prefixes $z$.
  We prove that any thread $p$ that decides must decide $d([z])$.
  Let $r$ be the result of operation $c_p$, $s'$ the state read by $p$, and $x$ be the permutation computed by $p$.
  It is the case that $c_p$ returns $r$ in $\tau(s,x)$ and state $s'$ follows $c_p$ in $\tau(s,x)$.
  By definition of $l$, this is also true in $l|B$.
  Hence, $x$ and $z$ are indistinguishable from $s$ for $c_p$.
  This leads to $\clazz{x} = \clazz{z}$, as required.
  For $u \in \{0,1\}$, there exists one indistinguishability class $\clazz{z_u}$ with $d(\clazz{z_u})=u$.
  From what precedes, for some permutation $x \in \clazz{z_u}$, if $x[0]$ is invoked solo in a run, $u$ is decided.
  This show that weak validity holds.
\end{proof}

In the proof above, we rely on the fact that a thread may read the object after applying a write operation.
This requires the object to be long-lived.
If we restrict our attention to one-shot objects, as the state at the end of a permutation does not matter, showing the result is simpler.
In detail, the indistinguishability relation between elements in $\perm(B)$ is defined only using return values.
The ($\geq$) part of \reftheo{distcn} is unchanged.
For the ($\leq$) part, the thread does not read the state of the object but just uses its return value to locate the component.

To complete \reftheo{distcn}, we also provide a full picture of the readable objects in $\cn_1$.
Recall that  $T$ is historyless when all its write operations are overwriting each other \cite{jayanti96}.
It is well-known that these objects are in $\cn_1$.
This is also the case if the write operations are all commuting.
Two operations are \emph{weakly-commuting} when in every state, applying them in any order leads to the same new state, and one operation does not notice the other.
For instance, a read and a write are weakly-commuting.
This is also the case of an increment and a fetch-and-add.
$T$ is \emph{permissive} when all pairs of write operations are either overwriting or weakly-commuting.
The result below establishes that the permissive objects are exactly the objects in $\cn_1$ which are jointly linearizable, deterministic, and readable.
This result was found independently by \citet[Theorem 1]{cn1}.

\begin{corollary}
  \lablem{proofs:permissive}
  Consider $T$ a readable type.
  $T$ is in $\cn_1$ if and only if $T$ is permissive.
 \end{corollary}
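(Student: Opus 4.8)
The plan is to reduce the statement to \reftheo{distcn} and then work out, purely at the level of the sequential transition relation, which readable types have all their indistinguishability graphs connected. Since $T$ is readable, \reftheo{distcn} gives $\cn(T)=1$ exactly when $T\notin\dist{k,l}$ for all $k\geq 2$ and $l\geq 2$, i.e. exactly when $\Gr_T(B,s)$ is connected for every bag $B$ and every state $s$ of $T$. So it suffices to prove: \emph{$T$ is permissive iff all of its indistinguishability graphs are connected.}

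For the ``only if'' direction (permissive $\Rightarrow$ connected) I would use that the adjacent transpositions generate every permutation, so it is enough to connect, for each permutation $x=c_1\ldots c_m$, the permutation $x'$ obtained by swapping the operations at positions $i$ and $i+1$. Let $t=s.(c_1\ldots c_{i-1})$. If one of $c_i,c_{i+1}$ is a query (never changes the state), the other operation gets the same response in $x$ and $x'$ and the two executions visit the same states from that operation onwards, so it strongly labels $(x,x')$. If $c_i$ and $c_{i+1}$ are both write operations, permissivity forces them to either weakly commute or overwrite each other. In the weakly-commuting case, the operation of the pair that ``does not notice'' the other keeps its response when they are swapped, and since the states after the pair coincide, that operation labels $(x,x')$. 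In the overwriting case, the later write erases the earlier one both in state ($t.cd=t.d$, $t.dc=t.c$) and in response value, so the earlier write obtains the same response in $x$ and $x'$ and the two executions reach the common state $t.c$ right after it; again there is an edge. Hence every adjacent transposition is realized and $\Gr_T(B,s)$ is connected. (This direction is essentially the well-known fact, cited in \refsection{dist}, that historyless and ``all writes commuting'' types lie in $\cn_1$, recast graph-theoretically.)

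For the ``if'' direction I would argue the contrapositive: from a non-permissive type exhibit a bag $B$ and a state $s$ with $\Gr_T(B,s)$ disconnected, so that $\cn(T)\geq 2$ follows from \reftheo{distcn}. Fix a pair of write operations $c,d$ (possibly $c=d$, assigned to two distinct threads) that is neither overwriting nor weakly commuting. The natural candidate is $B=\{c,d\}$ read from a state $s^\star$ from which the edge $(cd,dc)$ cannot be labeled, recalling that $c$ labels this edge only if $c$ returns the same value with and without a preceding $d$ \emph{and} $s^\star.dc\in\{s^\star.c,\;s^\star.cd\}$, and symmetrically for $d$; if both conditions fail, the two-node graph is disconnected. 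Unfolding the negations of ``overwriting'' and ``weakly commuting'' produces a witness state where either the states genuinely fail to commute ($s.cd\neq s.dc$) or where $c$ notices $d$ and $d$ notices $c$, and in each of these base situations the relevant label conditions break.

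I expect the main obstacle to be exactly this last direction, because the two failures (of the ordering part and of the ``does not notice'' part) may be \emph{witnessed at different states}, so no single two-operation bag suffices. The fix is to absorb a short ``setup'' sequence into a larger bag $B$ that drives the object into a state where both label conditions fail at once, and then to check that the larger graph is still disconnected --- using $\clazz{x}=\clazz{y}$ whenever $x[0]=y[0]$, so that distinct components are read off from the first operation of each permutation. Making this case analysis exhaustive, so that every flavour of non-permissivity (a non-commuting pair, a mutually-noticing pair, a pair failing the return-value part of overwriting) is covered without gaps, is the delicate step; the remainder is routine bookkeeping with the transition relation.
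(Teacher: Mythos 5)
Your reduction of the corollary to \reftheo{distcn} plus the claim ``permissive $\iff$ all indistinguishability graphs connected'' is a legitimate route, and your forward direction (adjacent transpositions, with a case split on query / weakly-commuting / overwriting pairs) is a fleshed-out version of what the paper disposes of in a single sentence. For the hard direction the paper takes a different and more direct path: from a non-permissive triple $(s,c,d)$ it builds a two-thread consensus protocol outright --- each thread writes its proposal to a per-thread register, applies its operation to an object initialized in state $s$, and uses either the order-dependent return value or (via readability) a subsequent read of the object's state to decide which operation was linearized first. It therefore never needs to exhibit a disconnected graph for this direction; your route instead factors through the $(\geq)$ half of \reftheo{distcn}, which is fine in principle but buys you nothing, since that half is itself proved by a consensus construction.

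The genuine gap is exactly where you flagged it: the contrapositive case analysis is announced but not carried out, and the mechanism you propose for completing it does not work as described. First, the ``setup sequence absorbed into a larger bag'' is unnecessary --- the graphs in $\Grs_{O}$ are parameterized by an arbitrary state $s$ of the object (and the proof of \reftheo{distcn} explicitly allows initializing the object in any such state), so you may take $B=\{c,d\}$ directly at the witness state. Enlarging $B$ would actually hurt you: permutations that place the setup operations after $c$ and $d$ create edges you would then have to rule out, and the observation $x[0]=y[0]\Rightarrow\clazz{x}=\clazz{y}$ only bounds the number of components from above, not from below. Second, and more substantively, the difficulty you identify --- that the failure of ``overwriting'' and the failure of ``weakly-commuting'' may be witnessed at different states --- is not resolved by your sketch: if at every single state the pair locally satisfies one of the two properties (with which one varying from state to state), then every two-element graph is connected and no choice of $s$ yields your witness. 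The argument must either adopt a per-state reading of permissivity (as the paper implicitly does when it writes a single existential over $(s,c,d)$ and unfolds the negated disjunction at that one state) or supply an additional argument excluding such mixed types. Until one of these is pinned down, the ``if'' direction of your equivalence, and hence the corollary, is not established.
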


\begin{proof}
  (\implies)
  For the sake of contradiction, assume that $T$ is not permissive.
  This means that some pair of operations is neither overwriting, nor weakly-commuting.
  In other words, there exists $(s,c,d)$ such that:
  
  $$
  \begin{array}{ll}
    & \neg ~ \lor ~ (\tau(s,c) = \tau(s.d,c)) \\
    & \hspace{1.2em} \lor~ (\tau(s,d) = \tau(s.c,d)) \\
    & \hspace{1.2em} \lor~\land (\tau(s.c,d).\st = \tau(s.d,c).\st) \\
    & \hspace{2.4em} \land~\lor~(\tau(s,c).\val=\tau(s.d,c).\val) \\
    & \hspace{3.5em} \lor \hspace{0.5em} (\tau(s.d).\val=\tau(s.c,d).\val) \\
    = & \land ~ \lor ~(\tau(s,c).\st \neq \tau(s.d,c).\st) \\
    & \hspace{1.2em} \lor \hspace{0.5em} (\tau(s,c).\val \neq \tau(s.d,c).\val) \\
    & \land ~ \lor ~ (\tau(s,d).\st \neq \tau(s.c,d).\st) \\
    & \hspace{1.2em} \lor \hspace{0.5em} (\tau(s,d).\val \neq \tau(s.c,d).\val) \\
    & \land ~ \lor ~ (\tau(s.c,d).\st \neq \tau(s.d,c).\st) \\
    & \hspace{1.2em} \lor \hspace{0.5em} \land (\tau(s,c).\val \neq \tau(s.d,c).\val) \\
    & \hspace{2.6em} \land \hspace{0.3em} (\tau(s,d).\val \neq \tau(s.c,d).\val) 
  \end{array}
  $$
  
  We show that this permits to solve consensus for two threads $p$ and $q$.
  For starters, consider that ($\tau(s,c).\val \neq \tau(s.d,c).\val$) and ($\tau(s.d).\val \neq \tau(s.c,d).\val$) hold.
  In order to solve consensus, we use one register per thread and some object $O$ of type $T$, initially in state $s$.
  Thread $p$ (respectively $q$) writes its proposal in its register then applies operation $c$ (respectively $d$) to $O$.
  When returning from this call, $p$ may identify whether $c$ is linearized before $d$ or not.
  If this holds, it decides its own proposal.
  In the converse case, the thread picks the other proposal.
  Otherwise, $\tau(s.c,d).\st \neq \tau(s.d,c).\st$.
  In this case, the construction is similar to the previous one, but requires in addition to read the state of the object.
  This read serves to determine the linearization order.

  (\impliedby)
  Consider a permissive object $O$.
  Any indistinguishability graph in $\Gr_O$ has a single component.
  Applying~\reftheo{distcn}, $O$ is in $\cn_1$.
  %% *long version*
  %% Suppose that there exists a two-thread consensus protocol implemented from permissive objects and atomic read/write registers.
  %% As in the proof of \reftheo{distcn}, let $\mathcal{C}$ be a critical configuration such that the two threads are poised to access the same permissive object.
  %% Thread $p$ is about to execute an operation $c$ that leads to a (say) $0$-valent state, while the operation $d$ of the other thread $q$ leads to a $1$-valent state.
  %% %
  %% First, consider that an operation overwrites the other:
  %% Let $\mathcal{D}$ be the $0$-valent configuration if $c$ is applied then $d$ and $\mathcal{D'}$ be the $1$-valent configuration when $d$ is executed.
  %% Since $d$ overwrites the value written by $c$, $\mathcal{D}$ and $\mathcal{D'}$ only differ in the internal state of thread $p$.
  %% Therefore $q$ decides the same value starting from $\mathcal{D}$ and $\mathcal{D'}$, which is impossible because the two configurations have different valency.
  %% %
  %% Next, suppose that the two operations weakly-commute:
  %% Let $s$ be the state of the considered object in configuration $\mathcal{C}$.
  %% Without lack of generality, assume that $\tau(s.d,c).\val=\tau(s,c).\val$ and $\tau(s.c,d).\st=\tau(s.d,c).\st$.
  %% Let $\mathcal{D}$ be the $0$-valent configuration if $c$ is executed then $d$ and $\mathcal{D'}$ be the $1$-valent configuration if the two operations are executed in converse order.
  %% From what precedes, $\mathcal{D}$ and $\mathcal{D'}$ are not distinguishable for thread $p$;
  %% a contradiction.
\end{proof}

\subsection{Predicting scalability}
\labappendix{proofs:scalability}

Below, we state a technical lemma then prove our results linking the scalability of an object to its indistinguishability graphs.

\begin{lemma}
  \lablem{distlab}
  Consider a shared object $O$ whose data type is $T$, some state $s$ of $T$, and a bag $B$ of operations.
  If $c$ is labeling in $\Gr_T(B,s)$
  then
  for every permutation $x \in B$,
  the response value of $c$ in $x$ is the response obtained when applying $c$ to state $s$.
\end{lemma}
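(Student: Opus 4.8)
The plan is to fix one convenient representative permutation and then propagate its response for $c$ to every other permutation along the edges of the graph. First I would recall that $c$ being labeling in $\Gr_T(B,s)$ means $c$ lies in the label of every edge and, as observed right after the definition, that the graph then consists of a single indistinguishability class, i.e.\ $\Gr_T(B,s)$ is connected. Since $c \in B$, I would pick any permutation $y$ of $B$ in which (an occurrence of) $c$ sits in the first position. By the definition of the response of an operation, the response of $c$ in $y$ is the return value produced by applying the prefix of $y$ up to and including $c$ to the state $s$; but that prefix is just $c$, so this response equals $\tau(s,c).\val$, the return value $c$ yields when applied directly to $s$.

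Next I would carry that value to an arbitrary permutation $x$ of $B$. By connectedness there is a path $x = x_0, x_1, \dots, x_m = y$ in $\Gr_T(B,s)$. For each $i$, the pair $(x_i, x_{i+1})$ is an edge, and since $c$ is labeling it belongs to the label of that edge, so $x_i \indistinguishable{c,s} x_{i+1}$; by the first condition in the definition of this relation, $c$ returns the same value in $x_i$ and in $x_{i+1}$. Chaining these equalities along the path gives that $c$ returns in $x$ the same value as in $y$, namely $\tau(s,c).\val$. Since $x$ was arbitrary, the lemma follows.

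The one delicate point, which I would flag explicitly, is inferring connectedness of $\Gr_T(B,s)$ from the mere fact that $c$ lies in the label of every edge: a graph with no edges at all vacuously has $c$ in the label of all of its (zero) edges yet is disconnected, so connectedness is not literally forced here but is part of what the notion of a labeling operation is meant to convey --- indeed the text asserts that in this case there is a single indistinguishability class. Accordingly I would take the operative content of the hypothesis to be that $\Gr_T(B,s)$ is connected and that every edge carries $c$ in its label, which is exactly what the propagation step needs. A minor bookkeeping matter, in case $c$ occurs in $B$ with multiplicity greater than one, is to keep track of the particular occurrence of $c$ under consideration and to place that occurrence first in $y$; this leaves every step above intact.
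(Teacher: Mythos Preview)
Your argument is correct, but you take a longer route than the paper and worry about a subtlety that the paper's definition actually rules out. In the paper's reading, the graph $\Gr_T(B,s)$ has a (possibly empty-labeled) edge between \emph{every} pair of permutations, and ``$c$ is labeling'' means precisely that $c$ belongs to the label of each such pair; equivalently, $x \indistinguishable{c,s} x'$ holds for \emph{all} $x,x'$. Under this reading the graph is complete whenever some operation is labeling, so there is no vacuous case to worry about. The paper's proof accordingly skips the path argument entirely: it just picks $y$ with $c$ first and uses the direct edge $(x,y)$, whose label contains $c$, to conclude that $c$ returns the same value in $x$ as in $y$, namely $\tau(s,c).\val$.

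Your path-based propagation is sound and would still work under a weaker reading of ``labeling'' (one that only guarantees connectedness plus $c$ in every existing label), which is what motivates your caveat. So your version is the more defensive one; the paper's version is the one-line instantiation of the same idea once you accept that ``labeling'' gives you the direct edge. The bookkeeping remark about multiplicities is fine and does not affect either argument.
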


\begin{proof}
  Consider some permutation $x$ of $B$ and let $v$ be the response value of $c$ in $x$.
  Let $y$ be a permutation of $B$ starting with $c$.
  As $c$ is labeling in $\Gr_T(B,s)$, there is an edge $(x,y)$ in $\Gr_T(B,s)$ with a label including $c$.
  Thus, the response of $c$ in $x$ is the same as in $y$.
  Since $c$ is the first operation in $y$, its response is $\tau(s,c).\val$.
\end{proof}

\oneshot*

\begin{proof}
  \begin{inparaenumorig}
  \item[($\impliedby$)]
    To implement $O$ in a conflict-free manner, we proceed as follows:
    Each thread holds a local copy of the object, initially in state $s$.
    Upon calling an operation (at most once because the object is one-shot), the thread applies the operation to its local copy then it returns the corresponding response value.
    Clearly, this implementation is both wait-free and conflict-free.
    Linearizability follows from \reflem{distlab} and the fact that $B$ is labeling in $\Gr_T(B,s)$ for any bag $B$.
    In detail, consider a run $\run$ in which threads execute the operations in $B$.
    Let $x$ be a permutation of $B$ that respects the real-time order in $\run$.
    By assumption, $B$ is labeling in $\Gr_T(B,s)$.
    Hence, applying \reflem{distlab}, the response value of $c \in B$ is $\tau(s,c).\val$.
    From which it follows that $x$ is a linearization of $\run$.    
  \item[($\implies$)]
    Let $\mathcal{I}$ be a conflict-free and wait-free implementation for $O$.
    Consider some bag $B$ of $\cardinalOf{\procSet}$ operations, and two permutations $x,y \in \perm(B)$.
    Let $\run_x$ (respectively, $\run_y$) be a run of $\mathcal{I}$ in which all the threads execute the operations in $B$ following the order in $x$ (resp. $y$).
    This is possible because $B$ must comply with $O.m$ and $\mathcal{I}$ is wait-free.
    Consider an operation $c \in B$, and let $d$ be an operation before $c$ in $x$.
    Because $\mathcal{I}$ is conflict-free, $c$ never reads a register written by $d$.
    Hence the response value of $c$ in $x$ is $\tau(s,c).\val$.
    The same observation holds for permutation $y$.
    Moreover, because operations do not write to a common register, the object is in the same state after $\run_x$ and $\run_y$. % FIXME ?
    From which it follows that $c$ labels $(x,y)$ in $\Gr_T(B,s)$.
  \end{inparaenumorig}
\end{proof}

\longlived*

\begin{proof}
  We examine each side of the equivalence.
  \begin{inparaenumorig}
  \item[($\impliedby$)]
    The implementation is exactly the same as in the one-shot case.
    That is, each thread maintains a local copy of the object.
    Upon executing an operation $c$, the operation is applied to the local copy and the corresponding response value returned.
    Let $\run$ be a run of this implementation and $h$ be the history induced by $\run$.
    As the above algorithm is wait-free, we assume without lack of generality that $h$ is complete.
    Consider some linearization of $(\mathit{ops}(h),\ll_h)$, where $\mathit{ops}(h)$ are the operations in $h$ and $\ll_h$ is the real-time order in $h$.
    Let $l$ be the sequential history in which operations are executed according to this linearization.
    We establish that all the operations in $l$ have the exact same responses as in $h$.
    This shows that history $h$ is linearizable.
    For this, we iterate over all the operations, one thread at a time, considering the order in which they appear in $l$.
    Each operations is ``validated'' by moving it at the start of the sequence, as in run $\run$.
    In detail, consider some thread $p$.
    Let $c$ be the last non-validated operation by $p$, and $d$ be the operation right before it in $l$.
    If $d$ is also executed by $p$ we are done:
    the start of the sequence consists in all the operations of $p$ which were executed before $c$.
    This corresponds exactly to what happens in $\run$.
    Otherwise $c$ and $d$, can be swapped while preserving the same response values everywhere:
    by assumption, $c$ and $d$ have the exact same response before the swap (they are both labeling), and the swap does not change the state of the object (labels are strong).
  \item[($\implies$)]
    Consider an indistinguishability graph $\Gr_T(B,s)$ in $\Grs_O$, with $B=\{c,d\}$ a bag of operations.
    We prove that if a conflict-free implementation $\mathcal{I}$ of $O$ exists, $B$ is strongly labeling in $\Gr_T(B,s)$.
    We reach state $s$ by first executing an appropriate sequential run $\run_0$.
    This is possible because every operation is executable by at least one thread.
    Then, consider that two thread $p$ and $q$ are about to execute respectively operation $c$ and $d$.
    (With $c \in O.m[p]$ and $d \in O.m[q]$.)
    Choose a continuation in which $p$ executes first $c$ followed by $d$ by $q$.
    Because $\mathcal{I}$ is conflict-free, the return value of $d$ is the same as when applying $d$ first.
    Hence $d$ is labeling in $\Gr_T(B,s)$.
    Using a symmetrical argument, $c$ is also labeling.
    Moreover, the state of the object is exactly the same after applying $cd$ or $dc$.
  \end{inparaenumorig}
\end{proof}

\tikzstyle{run}=[->, decorate, decoration={snake, amplitude=.4mm}]

\begin{figure*}[t]
  \footnotesize
  \begin{tikzpicture}[scale=1.2,auto]
    % Define the nodes
    \node (init) at (2,2) {$\mathcal{C}_0$};
    \node (b) at (2,1) {$\mathcal{C}$};
    \node (pc) at (1,0) {$\mathcal{C}'$};
    \node (qc) at (3,0) {$\mathcal{C}''$};

    \node (top) at (4,2.5) {};
    \node (bot) at (4,-3) {};

    \node (init2) at (8,2) {$\mathcal{C}_0$};
    \node (s) at (8,1) {$\mathcal{C}$};
    
    \node (sg) at (6,0) {$\mathcal{C}$};
    \node (spg) at (5,-1) {$\mathcal{C}'$};
    \node (n1g) at (7,-1) {$q$ decides};
    \node (n2g) at (6,-2) {$q$ decides};
    
    \node (sd) at (10,0) {$\mathcal{C}$};
    \node (spd) at (9,-1) {$\mathcal{C}'$};
    \node (sppd) at (11,-1) {$\mathcal{C}''$};
    \node (n2d) at (10,-2) {$\mathcal{C}''$};
    
    % Draw the curves
    \draw[run] (init) -- (b) node[midway,right] {};
    \draw[run] (init2) -- (s) node[midway,right] {};
    
    % Draw the arrows
    \draw[->] (b) -- (pc) node[midway, above left] {\begin{tabular}{r}\grayed{($p$ calls $c$)}\\$\step_p$\end{tabular}};
    \draw[->] (b) -- (qc) node[midway, above right] {\begin{tabular}{l}\grayed{($q$ calls $c$)}\\$\step_q$\end{tabular}};
    
    \draw[->] (sg) -- (spg) node[xshift=.7em, yshift=-.7em, midway, above left] {\begin{tabular}{r}\grayed{($p$ reads $X_p$)}\\$\step_p$\end{tabular}};
    \draw[run] (sg) -- (n1g) node[xshift=-.7em, yshift=-.7em, midway, above right] {\begin{tabular}{l}\grayed{($q$ solo)}\\$\lambda$\end{tabular}};
    \draw[run] (spg) -- (n2g) node[midway, above right] {$\lambda$};
    
    \draw[->] (sd) -- (spd) node[xshift=.7em, yshift=-.7em, midway, above left] {\begin{tabular}{r}\grayed{($p$ updates $X_p$)}\\$\step_p$\end{tabular}};
    \draw[->] (sd) -- (sppd) node[xshift=-.7em, yshift=-.7em, midway, above right] {\begin{tabular}{l}\grayed{($q$ updates $X_q$)}\\$\step_q$\end{tabular}};
    \draw[->] (spd) -- (n2d) node[midway, below left] {$\step_q$};
    \draw[->] (sppd) -- (n2d) node[midway, below right] {$\step_p$};

    \draw[-, dashed] (s) -- (sg) node[midway, above left] {(i)};
    \draw[-, dashed] (s) -- (sd) node[midway, above right] {(ii)};
    
    \draw[-,color=Gray!50] (top) -- (bot);
  \end{tikzpicture}
  \caption{
    \labfigure{power}
    Illustration of \reflem{proofs:power}.
  }
\end{figure*}

\begin{lemma}
  \lablem{proofs:power}
  If $c$ has consensus power
  then
  there is no update-conflict free implementation of $c$.
\end{lemma}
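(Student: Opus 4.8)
The plan is to prove the contrapositive directly by a valency argument, in the style of the $(\leq)$ direction of \reftheo{distcn} and as depicted in \reffigure{power}. Since $c$ has consensus power, the type $T$ restricted to the single operation $c$ solves binary consensus among two threads $p$ and $q$; fix a wait-free such algorithm $\mathcal{A}$, and observe that the only operation $\mathcal{A}$ ever invokes on those objects is $c$, while $\mathcal{A}$ otherwise uses only read/write registers. Assume, for contradiction, that some object $O$ of type $T$ admits a wait-free linearizable implementation $\mathcal{I}$ in which $c$ is update-conflict free. Replacing every type-$T$ object of $\mathcal{A}$ by an instance of $O$ realized through $\mathcal{I}$ yields a wait-free algorithm $\mathcal{A}'$ solving $2$-consensus using only base objects --- the registers of $\mathcal{A}$, together with whatever base objects (registers, compare-and-swap, test-and-set) $\mathcal{I}$ uses internally --- where, crucially, every access to a base object internal to $\mathcal{I}$ occurs inside the execution of a $c$-operation.

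Next I would run the standard FLP analysis on $\mathcal{A}'$, reusing the terminology (configuration, $u$-valent, bivalent, critical, $\mathcal{C}.c$) from the proof of \reftheo{distcn}: there is a critical configuration $\mathcal{C}$ with, say, $\mathcal{C}.p$ being $0$-valent and $\mathcal{C}.q$ being $1$-valent. If $p$ and $q$ are poised on distinct base objects their steps commute, so $\mathcal{C}.p.q = \mathcal{C}.q.p$ is at once $0$- and $1$-valent, which is impossible; hence they are poised on the same base object $X$. The linchpin is that at most one of the two pending steps can modify $X$, except possibly when $X$ is a plain read/write register and both steps are writes. Indeed, if $X$ is internal to $\mathcal{I}$, then $p$ and $q$ are both executing $c$ on the same implemented object (distinct implemented objects have disjoint internal state, which would put us in the ``distinct base objects'' case), so two modifications of $X$ would be an update conflict between two invocations of $c$, contradicting that $c$ is update-conflict free in $\mathcal{I}$.

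The remaining case analysis then closes exactly as in \reffigure{power}. (i) Suppose at most one of $p$'s and $q$'s steps modifies $X$. If neither does, the steps commute and $\mathcal{C}.p.q=\mathcal{C}.q.p$ is simultaneously $0$- and $1$-valent, a contradiction. If exactly one does, say $q$'s step is a pure read, then the shared state is unchanged, so $\mathcal{C}.q$ is indistinguishable for $p$ from $\mathcal{C}$; a solo run of $p$ (with $q$ taking no further step) therefore decides the same value from both, namely $0$, since from $\mathcal{C}$ it first steps into the $0$-valent $\mathcal{C}.p$ --- contradicting that $\mathcal{C}.q$ is $1$-valent. (ii) Suppose $X$ is a plain register written by both steps. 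Then $\mathcal{C}.q.p$ and $\mathcal{C}.p$ carry the same shared state ($p$'s write overwrites $q$'s) and the same local state for $p$, hence are indistinguishable to every thread but $q$; a solo run of $p$ thus decides $1$ from the $1$-valent $\mathcal{C}.q.p$ and $0$ from the $0$-valent $\mathcal{C}.p$, again a contradiction. Every case being ruled out, no such $\mathcal{I}$ exists.

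I expect the main obstacle to be the bookkeeping that keeps the argument honest: one must justify ``the same implemented object'' from disjointness of internal states, and check that update-conflict freedom is exactly what excludes ``both steps modify the same internal register'', with the classical overwrite argument mopping up the one residual case of two writes to a shared plain register of $\mathcal{A}$. A shorter, essentially equivalent route is to note that update-conflict freedom forces every base object modified during a $c$-execution to be modified by a single thread; a single-writer instance of a register, compare-and-swap, or test-and-set object has consensus number one, and so does a (multi-writer) register, so $\mathcal{A}'$ would solve $2$-consensus out of consensus-number-one objects, which is impossible.
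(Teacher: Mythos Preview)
Your proof is correct and takes essentially the same route as the paper: assume an update-conflict-free implementation, plug it into a two-thread consensus algorithm, run the FLP valency argument to a critical configuration, and derive a contradiction by case analysis on the two pending base-object steps. Your write-up is slightly more explicit than the paper's in distinguishing registers internal to $\mathcal{I}$ from the read/write registers of $\mathcal{A}$ (and in spelling out the overwrite case, which the paper folds into ``the usual reasoning''), and your closing ``shorter route'' via consensus-number-one base objects is a nice alternative the paper does not mention.
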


\begin{proof}
  Let $T$ be the data type of operation $c$.
  Assume that $c$ has an update-conflict free implementation $\mathcal{I}$.
  Let $\mathcal{A}$ be an algorithm solving binary consensus with objects of type $T$.
  We consider that all these objects are implemented with $\mathcal{I}$.
  In what follows, we construct a run $\rho$ of $\mathcal{A}$ in which two correct threads $p$ and $q$ never terminate.

  The construction of $\rho$ is illustrated in \reffigure{power}.
  Starting from some initial bivalent configuration $\mathcal{C}_0$ of $\mathcal{A}$, $\rho$ must reach a critical bivalent configuration $\mathcal{C}$ when alternating the steps of $p$ and $q$.
  Otherwise, the run never terminates and we are done.
  In $\mathcal{C}$, $p$ and $q$ are about to do steps $\step_p$ and $\step_q$ on registers $X_p$ and $X_q$, respectively.
  First of all, observe that these steps are within a call to $c$ on the same object of type $T$ (left side of \reffigure{power}).
  Otherwise, we can apply the usual reasoning about the impossibility of consensus in read/write shared memory.
  Then, depending on the nature of the steps, there are several cases to consider (right side of \reffigure{power}).
  \begin{inparaenum}
  \item
    Assume that $\step_p$ is a read over $X_p$.
    Consider a solo continuation $\lambda$ of $q$ from $\mathcal{C}$ during which $q$ decides.
    Then, $\lambda$ is also applicable from $\mathcal{C}'=\mathcal{C}.\step_p$, leading to the same decision;
    contradiction.
    The very same reasoning holds in case $\step_q$ is a read.
  \item
    Assume that both steps update the registers.
    Then, because $\mathcal{I}$ is update-conflict free, these two steps cannot be on the same register.
    As a consequence, these two steps are commuting.
    Hence, the valency of $\mathcal{C}.\step_p.\step_q$ is the same as $\mathcal{C}.\step_q.\step_p$, which contradicts that $\mathcal{C}$ is critical.
  \end{inparaenum}
\end{proof}

\updateconflict*

\begin{proof}
  The ``only if'' part follows from \reflem{proofs:power}.
  Regarding the ``if'' part, below we detail and prove correct an update-conflict free implementation.
  \begin{construction}
    Threads share a global clock.
    Each thread also holds a copy of the object, a log and a local clock.
    Entries in the log are pairs of the form $E=(d,\theta)$, where $d$ is an operation and $\theta$ its timestamp.
    
    Upon calling an operation $c$, the thread first reads the global clock and stores its value in variable $t$.
    Then,
    \textbf{(1)} If $c$ is a left-mover, the thread appends an entry $E=(c,t)$ to its log, applies $c$ to the local state and stores the response value in variable $v$.
    Otherwise, \textbf{(2)} if $c$ is not a left-mover, the threads append an entry $E=(c,\bot)$ to its log.
    The thread then \textbf{(2a)} fetches and increments the global clock, and executes a compare-and-swap to change the timestamp of $E$ to the retrieved time value.
    Further, \textbf{(2b)} the thread fetches the timestamp assigned to $c$ (by itself or another thread) and stores it in variable $t$.
    The thread \textbf{(2c)} scans the logs, looking for pending operations, i.e., operation whose timestamp is still $\bot$.
    For each such operation, it executes step 2a above to assign it a timestamp.
    This helping mechanism is repeated for each other thread in the order of its log until a timestamp higher than $t$ is assigned.
    Then \textbf{(2d)}, the process finds in the logs all the operations that are between its local clock and $t$.
    Operations which have not been applied so far are applied to the local copy in the order of their timestamp.
    When doing so, the response value of operation $c$ is stored into variable $v$.
    In a last step \textbf{(3)}, the local clock is changed to $t$ before the content of variable $v$ is returned as the result of the operation.
  \end{construction}

  The above construction is wait-free because, when helping in step 2c, a process does so until it assigns a timestamp higher than variable $t$.  
  In what follows, we prove that it is also linearizable.

  For that purpose, consider a run $\lambda$ of the above construction and the corresponding history $h$.
  Since the construction is wait-free, we may always extend $\lambda$ to complete any pending operation.
  Hence, without lack of generality, we shall assume that $h$ is complete.
  In the run $\lambda$, a left-mover $c$ is \emph{seen} by an operation $d$ when $c$ is applied at the time the response of $d$ is computed, i.e., when the calling process executes step 2d.

  Let $L$ be the sequence of operations that are not left-movers in $h$ in the order of their timestamps.
  From $L$, we create a sequence $L'$ as follows:
  For each operation $c \in L$, we append to $L'$ all the left-movers in $h$ seen by $c$ and not yet in $L'$.
  Then we append operation $c$ to $L'$.
  This process is repeated for all the operations in $L$ in the order of the sequence ($<_L$).
  From $L'$, we then obtain $\hat{L}$ by re-ordering the left-movers in $L'$ according to real-time.
  In detail, we first set $\hat{L}$ to $L'$.
  Then, for any two left-movers $c$ and $d$ in $\hat{L}$, if $c \hb_{h} d$ but $d <_{\hat{L}} c$, operations $c$ and $d$ are swapped in $\hat{L}$.
  Let $l$ be the sequential history induced by $\hat{L}$.
  Below, we show that $l$ is a linearization of $h$.

  In the above construction, the local state is always uniquely defined by the sequence of operations that was applied so far.
  We can formulate the following two invariants about the construction:
  \begin{inparaenumorig}
  \item[\emph{(INV1)}] if $c$ happens before $d$ and $d$ has a timestamp, then $d$ has a timestamp and this timestamp is lower than $c$'s; and 
  \item[\emph{(INV2)}] if $c$ has a lower timestamp than $d$ and $d$ is not a left-mover, then $d$ sees $c$.
  \end{inparaenumorig}
  Proving these invariants is straightforward and thus omitted.

  First of all, consider two operations $c \hb_{h} d$.  
  Because of INV1, $c$ precedes $d$ in $L$, leading to $c \hb_l d$, as required.

  Second, we show that the return value of an operation $c$ in $l$ is the same as in $h$.
  This is straightforward in case $c$ is a left-mover.  
  Indeed, whatever is the local state, the operation returns always the same response, the one from the initial state (see~\refsection{dist:scalability}).
  Hereafter, assume that $c$ is not a left-mover.
  
  Let $p$ be the thread executing $c$ in $h$.
  If $c$ sees $d$ then by construction $d$ is before $c$ in $L'$.
  Thus, this is also the case in $\hat{L}$ unless $d$ was swapped with an operation $e$ that is after $d$.
  If this happens then $e \hb_h d$.
  However, in such a case, operation $e$ is seen by $c$ by INV2.
  Hence, it must precedes $c$ in $L'$;
  contradiction.
  
  Next, we prove that every operation before $c$ in $\hat{L}$ is indeed seen by $c$.
  Consider some operation $d$ preceding $c$ in $\hat{L}$.
  There are two cases to consider,
  \textit{(i)} $d$ precedes $c$ in $L'$,
  and \textit{(ii)} $c$ precedes $d$ in $L'$.
  If (i) holds, then by construction $c$ sees $d$, as required.
  Otherwise (ii), operation $d$ is swapped with an operation $e <_{L'} c$ because $d \hb_{h} e$.
  By INV2, because $c$ sees $e$, it must also see $d$.
  This contradicts that $d$ is after $c$ in $L'$.
  
  From what precedes, we know that the operations before $c$ in $\hat{L}$ are the ones seen by $c$.
  We use this to prove that the return value in $l$ is the one in $h$.
  Let $x$ be the sequence of operations applied before $c$ to compute its response value (step 2d).
  Let $y$ equal $\hat{L}_{|<c}$, that is all the operations before $c$ in $\hat{L}$.
  We have established that $x$ and $y$ contains the exact same operations.
  By INV2, operations that are not left-movers are in the same order in $x$ and $y$.
  Thus the two sequences may only differ wrt. left-movers.
  Let $d$ be the first left-mover in $x$ misplaced wrt. $y$.
  We move forward $d$ in $x$ by successively swapping it with the operation right before it until it reaches its position in $y$.
  Let $\hat{x}$ be the resulting sequence of operations.
  Because $d$ is a left-mover, $\tau(s_0,x).\st = \tau(s_0,\hat{x}).\st$ (see~\refsection{dist:scalability}).
  Thus, $\tau(s_0,x.c).\val = \tau(s_0,\hat{x}.c).\val$.
  By applying inductively the above reasoning, we conclude that the response value of $c$ in $l$ and $h$ is the same.
\end{proof}

\invisible*

\begin{proof}
  To implement object $O$, threads use an unbounded, wait-free queue $arr$.
  It permits to append a value ($\offer$), retrieve the index of its last element ($\last$), and access the element at a specific index ($\get$).
  In addition to $arr$, each thread maintains a local copy of the object and a pointer $pnt$, which indicates the index of the last operation in $arr$ applied to the local copy.
  An operation $c$ executes as follows:

  \begin{construction}
    If $c$ is not a right-mover, the thread announces $c$ by appending it to $arr$.
    Then, it applies all the operations (not yet applied) before $c$ in the queue to its local copy of the object.
    Otherwise, the thread does not announce the operation $c$.
    Instead, it locally applies all the operations in $arr$ that are not-yet-applied, stopping at the last operation observed at the time of the invocation.
    In all cases, the thread then locally applies operation $c$ and returns its result as the response to the invocation.
    During this computation, the pointer $pnt$ is updated appropriately.
  \end{construction}
    
  The above construction ensures that right-movers remain invisible.
  Since the objects used in this implementation are wait-free and threads perform a finite number of steps per invocation, every call to object $O$ is wait-free.

  Consider an execution $\run$ and its induced history $h$.
  (As previously, we shall assume that $h$ is complete.)
  We linearize the operations based on their accesses to the shared variable $arr$.
  In detail, the linearization point of the non-right-mover operations in $h$ occurs when they are added to $arr$ (using $\offer$), while for right-movers, it occurs when the last element in $arr$ is observed (using $\last$).
  Let $l$ denote such a linearization.
  By the choice of the linearization points, $\hb_h \subseteq \hb_l$.
  
  Next, we demonstrate that the histories $l$ and $h$ are equivalent.
  Let $c$ be an operation.
  We notice that all operations applied at the time of computing $c$'s response in $\run$ precede $c$ in $l$.
  We successively move all operations that precede $c$ in $l$ but were not applied when computing its response in $\run$ after $c$.
  Necessarily, such operations are right-movers.
  By definition, moving a right-mover $d$ to the right of an operation $e$ does not change the result of $e$, nor the state after $de$.
  Let $l_c$ be the sequential history obtained from this process.
  The result of $c$ in $l_c$ is the same as in $l$.
  Furthermore, by construction, it is identical to $c$'s result in $h$.
  From the above, we can conclude that the construction is linearizable.  
\end{proof}

\subsection{Adjusted objects}
\labappendix{proofs:adjusted}

\computability*

\begin{proof}
  Let $\Delta$ be a distributed task.  
  Consider an algorithm $\mathcal{A}$ that solves $\Delta$ using instances of $O$ and registers.  
  We obtain the algorithm $\mathcal{A}'$ by replacing each instance of $O$ with an instance of $O'$.  
  The algorithm $\mathcal{A}'$ is valid because, as $O$ adjusts $O'$:  
  \emph{(i)} every method that exists in $O$ also exists in $O'$ because $O'.T$ is a (restricted) subtype of $O.T$, and  
  \emph{(ii)} for any thread $p$, if $p$ can call the method $f$ of $O$, then it can also do so with $O'$ (because $O.m \subseteq O'.m$).  
  Now consider an execution $\run$ of $\mathcal{A}$.  
  Let $h$ be the history associated with $\run$.  
  Denote by $l$ a linearization of $h$.  
  By Liskov's substitution principle, since $O'.T$ is a subtype of $O.T$, for every instance $x$ of $O'.T$, the history $l|x$ remains a valid sequential history for $O.T$.  
  (That is, the operations return the same result values.)  
  Thus, the history $h$ is also a history of the algorithm $\mathcal{A}$.  
  Consequently, $\Delta$ is solved in $\run$.
\end{proof}

\scalability*

\begin{proof}
  Clearly, all the vertices in $G_{O.T}(B,s)$ are also in $G_{O'.T}(B,s)$.
  Consider some permutation $x$ of $B$ and let $h$ be the sequential run obtained by applying $x$ from $s$ according to $O.T$.
  Due to Liskov substitution principle, $h$ is also the sequential run obtained by applying $x$ from $s$ according to $O'.T$.
  Hence, for any $(x,y)$ linking two permutations $x$ and $y$ in $G_{O.T}(B,s)$, $(x,y)$ is also in $G_{O'.T}(B,s)$,
\end{proof}

\fi

\end{document}